\documentclass[conference,compsoc]{IEEEtran}
\ifCLASSOPTIONcompsoc
  \usepackage[nocompress]{cite}
\else
  \usepackage{cite}
\fi
\usepackage[hidelinks]{hyperref}

\usepackage[most]{tcolorbox}

\definecolor{takeawaygreen}{RGB}{46,125,80}
\definecolor{takeawayback}{RGB}{244,250,246}

\newtcolorbox{takeawaybox}{
  enhanced,
  colback=takeawayback,
  colframe=takeawaygreen,
  boxrule=0.5pt,
  arc=2pt,
  left=6pt,
  right=6pt,
  top=5pt,
  bottom=5pt,
  before skip=6pt,
  after skip=6pt,
}
\usepackage{tikz}
\usepackage{booktabs}  
\usepackage{tabularx}  
\usepackage{multirow}  
\usepackage{amsmath}   
\usepackage{amssymb}   
\usepackage{dsfont}
\usepackage{amsfonts}
\usepackage{amsthm}
\usepackage{enumitem}
\usepackage[caption=false,font=footnotesize]{subfig}
\usepackage[table]{xcolor}
\usepackage{array}
\usepackage{tabularx}
\usepackage{cancel}

\providecommand{\resultbox}[2]{%
\par\smallskip
\noindent\begingroup
\renewcommand{\arraystretch}{1.15}%
\begin{tabular}{@{\hspace{0.2em}}|@{\hspace{0.75em}}p{0.90\linewidth}@{}}
\small\textbf{Finding~#1.} \emph{#2}
\end{tabular}%
\endgroup
\par\smallskip
}

\newcommand{\best}[1]{\textbf{#1}}
\newtheorem{proposition}{Proposition}

\usepackage{graphicx}

\usepackage{amsmath}
\usepackage{amssymb}

\begin{document}

\title{MESA: Prioritizing Vulnerable Communication Channels \\ for Securing Multi-Agent Systems}

\author{
\IEEEauthorblockN{Kunyang Li, Kyle Domico, Jonathan Gregory, and Patrick McDaniel}
\IEEEauthorblockA{
University of Wisconsin--Madison\\
\texttt{{kli253, jgregory8}@wisc.edu} \qquad
\texttt{{domico, mcdaniel}@cs.wisc.edu}
}
}

\maketitle

\begin{abstract}
Multi-agent systems (MAS) are increasingly used to automate complex, distributed workflows. However, their inter-agent communication channels introduce new attack surfaces that remain poorly understood and are difficult to defend against. In this paper, we address how defenders should prioritize limited security effort to protect vulnerable communication channels before attacks are observed. This is motivated by our observation that the channel-level attack impact is highly non-uniform: a single compromised edge can account for up to $75\%$ of total attack success. We introduce \textsc{Mesa}, a label-free framework for proactively ranking which MAS edges are most security-critical---that is, most likely to affect the system's decision if compromised. \textsc{Mesa} combines six graph-theoretic metrics and two dynamic probes (ablation and masking) without requiring attack traces. We evaluate \textsc{Mesa} against a dynamic misinformation attack pipeline across three diverse MAS scenarios, eight network topologies, and five open-source LLMs from Qwen, Llama, and Gemma families. \textsc{Mesa} rankings correlate strongly with empirical per-edge attack success rate, achieving mean Spearman $\rho=+0.60$ (peaking at $+0.73$). In resource-constrained defense deployment, monitoring the top $10\%$ of \textsc{Mesa}-ranked edges intercepts about $3\times$ the successful attacks as random allocation. We further test \textsc{Mesa} under varying attacker and defender models and \texttt{LangGraph} workflows and characterize its limits under adaptive attacks and high-redundancy graphs. Overall, our results show that edge-level risk in MAS is often concentrated and predictable, allowing proactive hardening of multi-agent infrastructures. 

\end{abstract}


%
\IEEEpeerreviewmaketitle

\section{Introduction}
\label{sec:intro}
LLM-based multi-agent systems (MAS) are increasingly deployed to automate complex workflows in high-stakes domains, including medical consultation~\cite{tangMedAgentsLargeLanguage2024,schmidgallAgentClinicMultimodalAgent2025}, software development~\cite{hongMetaGPTMetaProgramming2024, qianChatDevCommunicativeAgents2024}, and customer services~\cite{liangLLMPoweredAIAgent2025}. Unlike single-agent systems, MAS agents exchange messages through \textit{communication channels}, which determine how instructions, evidence, reasoning, and errors propagate across the agent workflow. These communication channels enable collaboration between agents with different roles, providing the capability to complete specialized tasks that would be challenging for a single agent to accomplish independently.

\begin{figure}[t]
    \centering
    \includegraphics[width=1.02\columnwidth]{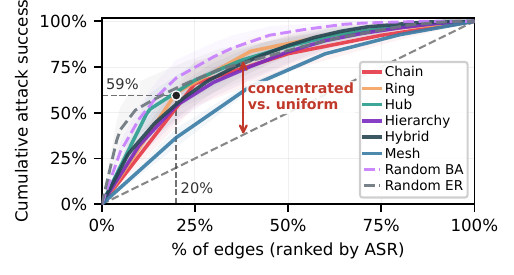}
    \caption{Attack success is edge-dependent; preliminary analysis showed that individually attacking the top edges ranked by attack success rate (ASR) contributes to unequal increases in cumulative attack success.}
    \label{fig:rq1-within-topology}
\end{figure}

However, these same communication channels also introduce new threat vectors within the MAS itself, and not all channels may be equally vulnerable. For instance, the compromise of a specific channel that many agents rely on may propagate far further than other peripheral channels. Yet no general model exists for how the attack location determines the impact on the system, a gap that affects both sides: attackers lack the basis for targeting channels that are most vulnerable, and defenders lack the information to optimize defense budgets across the entire system. These attacks range from prompt injection and jailbreaks~\cite{leePromptInfectionLLMtoLLM2024,greshakeNotWhatYouve2023,guAgentSmithSingle2024,juFloodingSpreadManipulated2024,triedmanMultiAgentSystemsExecute2025} to direct communication tampering~\cite{heRedTeamingLLMMultiAgent2025, li_a2asecbench_2025}, and can trigger serious misconduct such as authorizing fraudulent payments~\cite{chengLearningConcealRisk2026}, misdiagnosing diseases~\cite{alberMedicalLargeLanguage2025}, and manipulating college admissions~\cite{yanPracticalEthicalChallenges2024}. Without such a model, both attack and defense in MAS remain fundamentally unguided.

We hypothesize that vulnerabilities are not uniformly distributed in MAS. In other words, there exists a subset of communication channels that highly influence the final decisions. Corrupting them yields more successful attacks, making these channels more security-critical. In contrast, other channels are redundant or carry information that does not provide meaningfully new attack vectors. As a motivating experiment, we demonstrate that attack success is \textit{edge-dependent} and often concentrated to a small subset of edges, as shown in \autoref{fig:rq1-within-topology}. Here, we find that $20\%$ of edges within MAS can account for $59\%$ of attack success on average. Given the compounding costs of securing every part of a multi-agent system, this leads to our key question: \textit{where in the MAS should we prioritize security---which communication channels should be protected first?}

In this paper, we introduce \textsc{Mesa} (MAS Edge Saliency Analysis), a novel method that models vulnerability in MAS and ranks the most security-critical communication edges \textit{offline}, without requiring deployment or any online data collection. \textsc{Mesa} is built from insights in foundational work on structural vulnerability analysis in complex networks~\cite{albertErrorAttackTolerance2000, holmeAttackVulnerabilityComplex2002} and adapts this idea to LLM-based MAS workflows. More specifically, we derive the \textsc{Mesa} metric from eight features. Six are static graph-structural features inspired by foundational work on graph theory that capture concentration and redundancy in the communication graphs. Two are dynamic features inspired by classical work in saliency analysis that measure how much task performance changes when a channel is removed or the messages it carries are masked. \textsc{Mesa} aggregates these features to rank edges offline and without the need for runtime attack traces. 

\begin{figure*}[t]
    \centering
    \includegraphics[width=1.8\columnwidth]{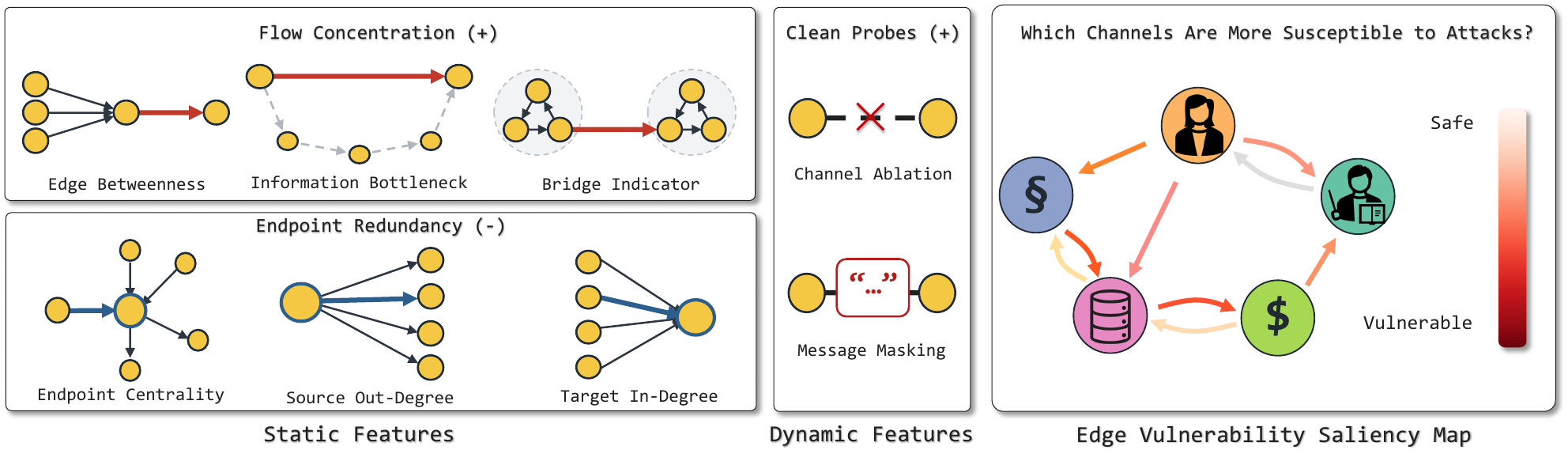}
    \caption{\textsc{Mesa} Framework: using static and dynamic features to construct the edge vulnerability saliency map.}
    \label{fig:overview}
\end{figure*}

We evaluate \textsc{Mesa}'s efficacy at predicting highly vulnerable edges and allocating limited defense resources against diverse threat models. We systematically model three diverse MAS application scenarios: customer service, software engineering (adapted from \texttt{HumanEval}~\cite{chenEvaluatingLargeLanguage2021}), and homogeneous debate (adapted from \texttt{GSM8K}~\cite{cobbeTrainingVerifiersSolve2021} and \texttt{CommonsenseQA}~\cite{talmorCOMMONSENSEQAQuestionAnswering}). In total, our evaluation covers 145 tasks, six canonical topologies, two random topologies, and five open-source LLMs (\texttt{Qwen-3.5-\{9B,27B\}}, \texttt{Gemma-4-\{E4B,26B\}}, and \texttt{Llama-3.1-8B}).

Our results show that \textsc{Mesa} is highly effective at identifying security-critical portions of the communication system and informing intelligent defense allocation strategies to protect the system while maintaining efficiency. Our \textsc{Mesa} metric is strongly predictive of security-critical communication channels: \textsc{Mesa} scores are strongly correlated ($\rho= +0.73$ ($p<0.001$)) with attack success, demonstrating we can effectively identify vulnerable channels offline. By ranking edges based on their \textsc{Mesa} score, this can be used to inform efficient defense allocation. When even just the top $10\%$ of edges (ranked by \textsc{Mesa}) are defended, it covers $\sim$$3\times$ as many successful attacks as random edge selection with the same budget. 
Finally, we show that \textsc{Mesa} works not only on our customized framework but also generalizes to \texttt{LangGraph}~\cite{LangGraphOverview} workflows, demonstrating portability to production-ready frameworks. 

From this work, we demonstrate that vulnerability in multi-agent systems is not evenly distributed through the communication network. However, this imbalance provides an opportunity for finer-grained and more intelligent analyses on attacks and defenses in MAS. Importantly, we position \textsc{Mesa} as a way to focus limited security resources, including auditing, red-teaming, and topology hardening, and call for edge-level defense design and allocation for multi-agent workflows. We make the following contributions:
\begin{itemize}
    \item We introduce \textsc{Mesa}, a novel metric derived from graph-theoretic and saliency features to proactively identify security-critical communication channels.
    \item We show attack impact in MAS is often concentrated on a small set of communication channels, and \textsc{Mesa} can effectively identify them offline.
    \item We apply \textsc{Mesa} for budgeted defense allocation under different threat models.
\end{itemize}
\section{Background}
\label{sec:background}
\subsection{MAS Design \& Topologies}
\label{sec:bg-mas}
Multi-agent systems coordinate several LLM agents through a communication graph. Nodes are agents, and directed edges carry messages. The graph determines which agents can influence each other, so it affects both task performance and system security. Two main design choices impact edge vulnerability. The first is \textbf{role and information structure}: \textit{heterogeneous} MAS assign roles to agents, each of which has distinct resources and prompts~\cite{hongMetaGPTMetaProgramming2024, wuAutoGenEnablingNextGen2023, liCAMELCommunicativeAgents2023}. \textit{Homogeneous} MAS use multiple copies of the same agent for debate or voting~\cite{duImprovingFactualityReasoning2023, liangEncouragingDivergentThinking2024, chanChatEvalBetterLLMbased2023}. Some systems partition information across roles, while others give agents the same input. A message is more security-critical when it contains information that the receiver cannot independently verify.

The second design choice is \textbf{topology}. Common patterns include centralized, sequential, hierarchical, ring, mesh, and hybrid graphs~\cite{yuNetSafeExploringTopological2024, zhugeLanguageAgentsOptimizable2024, liuDynamicLLMPoweredAgent2024}. These patterns appear in deployed frameworks: \texttt{LangGraph}~\cite{LangGraphOverview} exposes directed graphs with conditional edges (\textit{i.e.,} runtime channels that choose the next agent based on the current state); CrewAI~\cite{CrewAIDocumentationCrewAI} supports sequential and hierarchical processes; AutoGen~\cite{wuAutoGenEnablingNextGen2023} and MetaGPT~\cite{hongMetaGPTMetaProgramming2024} organize role-based collaborations; and A2A-style systems~\cite{A2AProtocol} route messages across agent services. Prior work shows that topology affects MAS capability and safety~\cite{yuNetSafeExploringTopological2024,wangGSafeguardTopologyGuidedSecurity2025,yangTopologicalStructureLearning2025}. We study a finer-grained deployment question: within a chosen graph, which directed edges should receive limited defense effort first?

\subsection{MAS Attack Surface} 
\label{sec:bg-attack}
MAS attacks can enter through several surfaces. \textbf{Agent-level attacks} compromise an agent through prompt injection~\cite{greshakeNotWhatYouve2023}, jailbreaks~\cite{zouUniversalTransferableAdversarial2023}, or poisoned agent state~\cite{wangBadAgentInsertingActivating2024, chen_agentpoison_2024}. \textbf{Tool and infrastructure attacks} exploit retrieval systems, external tools, side channels, or protocol-level message handling~\cite{triedmanMultiAgentSystemsExecute2025,songEarlyBirdCatches2025,li_a2asecbench_2025}. These attacks differ, but their effects often reach other agents through inter-agent messages. Recent attacks show that these communication channels are exploitable. For example, inter-agent messages can be exploited to facilitate LLM-to-LLM prompt injection~\cite{leePromptInfectionLLMtoLLM2024}; LLMs can be manipulated to spread faulty or damaging information through multi-agent communities~\cite{juFloodingSpreadManipulated2024}; jailbreak attacks can infectiously propagate between agents in a rapid fashion~\cite{guAgentSmithSingle2024}; and communication attacks such as AiTM~\cite{heRedTeamingLLMMultiAgent2025} and TOMA~\cite{liangTippingDominosTopologyAware2025} can directly intercept or rewrite inter-agent messages. \textsc{Mesa} asks the following question to address these attacks borne by vulnerable communication channels: before such attacks occur, can a defender identify which channels are the most exploitable by an adversary?

\section{Preliminaries}
\label{sec:problem}

\subsection{MAS as Communication Graphs}
\label{sec:setup}
We model MAS as a directed graph $G=(V,E)$. The topology of $G$ specifies the communication architecture. Each node $v\in V$ is an LLM-based agent with a role-specific system prompt and resource access; each directed edge $e=(u,v)\in E$ is a communication channel through which $u$ sends a message to $v$. Given a task $\tau\in\mathcal{T}$, agents exchange messages over $E$ and produce a final output $\mathrm{Out}(G,\tau)\in\mathcal{Y}$, which we compare against the ground truth $y^\star_\tau$. When an adversary compromises a subset of edges $E^\star\subseteq E$, we write the resulting system as $G^{E^\star}$. The majority of our analyses consider single-edge cases $E^\star=\{e\}$, and we additionally study coordinated multi-edge attacks in \S\ref{sec:rq3-dual}. 

\subsection{Threat Model}
\label{sec:threat-model}

We study integrity attacks on inter-agent communication. The trusted computing base consists of the agents' base LLMs, their system prompts, and the orchestration framework that routes messages along $G$. Specifically, we assume the base models are not backdoored, the system prompts are not tampered with, and the graph topology is fixed during execution. The topology $G$ is known to both the defender and the attacker: production MAS frameworks~\cite{LangGraphOverview, CrewAIDocumentationCrewAI, A2AProtocol} expose the communication graph explicitly, so security cannot rely on obscuring graph structure. We exclude orthogonal attack surfaces such as intra-agent tool or retrieval compromise~\cite{chen_agentpoison_2024,triedmanMultiAgentSystemsExecute2025} and cross-session memory tampering~\cite{dongMemoryInjectionAttacks2026}; they require separate, complementary defenses such as sandboxing and retrieval filtering. 

The attacker controls a subset of communication edges $E^\star\subseteq E$ with $|E^\star|\le B_a$, where $B_a$ is some attack budget. On each compromised edge $e\in E^\star$, the attacker observes the in-flight message $m_e(\tau)$ and replaces it with an adversarial message $m'_e(\tau)$, with the goal of causing $\mathrm{Out}(G^{E^\star},\tau)\neq y_{\tau}^\star$. The attacker cannot change the graph, read hidden reasoning or transient working memory, or edit agent state. We consider both single-edge compromise, which isolates the marginal impact of one communication channel, and coordinated multi-edge compromise, where the attacker can corrupt several channels in the same run. When $E^\star$ covers all outgoing edges of an agent $v$, our threat model strictly generalizes the agent-level attacks~\cite{yuNetSafeExploringTopological2024,wangGSafeguardTopologyGuidedSecurity2025}.

\noindent\textbf{Edge reachability.}
We also distinguish between intrinsic impact and attacker reachability. The worst-case attacker may compromise \textit{any} communication edge in $E$, which measures how damaging each edge would be if corrupted. In deployment, however, not all edges are equally exposed. Some channels are directly shaped by user input, some are reachable through user-controlled records or retrieval content, and others are internal authenticated channels that require man-in-the-middle access or agent compromise. We therefore evaluate both full-surface attack over all edges and exposure-aware attacks restricted to the edges reachable under a given attacker capability. We instantiate these exposure levels and evaluate exposure-aware defenses in \S\ref{sec:rq5-exposure}. 

\noindent\textbf{Attacker knowledge.}
We consider three attacker knowledge levels. A \textit{static} attacker knows the topology, agent roles, and task distribution, but not the defender's defense policy. A \textit{gray-box} attacker can use \textsc{Mesa}-style rules to choose highly vulnerable edges, but does not know which edges are actually defended. A \textit{white-box} adaptive attacker knows the defended edge set and can avoid them when choosing $E^\star$, subject to its attack budget and reachability constraints.

For our attack analysis, we deploy \textbf{dynamic misinformation injection}~\cite{juFloodingSpreadManipulated2024,heRedTeamingLLMMultiAgent2025,li_a2asecbench_2025,leePromptInfectionLLMtoLLM2024}. This attack family is well-grounded in recent MAS attacks: it captures manipulated-knowledge spread~\cite{juFloodingSpreadManipulated2024}, agent-in-the-middle attacks on inter-agent channels~\cite{heRedTeamingLLMMultiAgent2025}, and protocol-level payload tampering in A2A/MCP layers~\cite{li_a2asecbench_2025}. We use information injection because it directly tests whether downstream agents rely on the content carried by a compromised edge. Our edge vulnerability hypothesis is not specific to misinformation: other edge-injected attacks (e.g., prompt injection and manipulated tool reports) are compatible with \textsc{Mesa} as well. The pipeline is detailed in \S\ref{sec:method-attack}.

\subsection{Problem Formulation}
\label{sec:problem_formulation}
We formulate the defender's task from the perspective of a system operator deploying MAS under a limited security budget. The defender has white-box access to $G$ and can observe traffic on any edge. However, they cannot afford to defend every edge. Runtime monitors, LLM judges, content filters, and red-teaming all scale with the number of edges~\cite{solomonLumiMASComprehensiveFramework2026,wangMegaAgentLargeScaleAutonomous2025}, while vulnerability may be concentrated on only a small subset. The defender therefore needs to decide which edges should receive limited security effort first. 

Given a multi-agent system and a representative task set, the goal is to construct an edge-scoring function that quantifies the expected security impact of an edge \textit{before} attack outcomes are observed. This setting differs from supervised vulnerability prediction~\cite{wangGSafeguardTopologyGuidedSecurity2025, heSentinelAgentGraphbasedAnomaly2025, miaoBlindGuardSafeguardingLLMbased2025} in a key aspect---the defender has no attack labels. Production MAS logs rarely contain reliable attack traces, and collecting traces means reacting only after the system has already been exposed. Instead, we want to rank edges by their vulnerability score before deployment, and the score must be computed from the graph and clean executions only. 

The resulting ranking is used under a defense budget. At runtime, the defense budget $k\in[0,1]$ covers only the top $k\cdot|E|$ ranked edges. Before deployment, the defender can gather edge information through dynamic probing. 
In addition, we assume a minimally invasive defender that does not retrain agents, rewrite system prompts, or sandbox execution, which are complementary to our goals. Here, we want to study where limited edge-level effort should go, not how each defense should be implemented. 
We instead take a high-level system defense perspective and focus on how a system operator can intelligently allocate defenses without overextending their resources, leaving the actual construction and placement of these defenses to future work. 
\section{\textsc{Mesa}: Edge Vulnerability Saliency}
\label{sec:methodology}
Here we introduce \textsc{Mesa}, a novel method for modeling and predicting vulnerability in multi-agent systems. This section describes how we construct \textsc{Mesa} from foundational work in graph theory. \textsc{Mesa} in turn enables us to quantify vulnerability in complex dynamic multi-agent systems, leading to better-informed defense allocation strategies.

In order to predict vulnerabilities in MAS, we first need a representative model of how agents are situated and communicate within a system. This leads us to the insight that MAS can be fundamentally viewed as a network graph, where agents comprise the nodes of the system and the edges are the channels by which they communicate. Further notation details for this modeling can be found in \S\ref{sec:setup}. By modeling multi-agent systems as a network topology problem, we are able to extract and analyze communication dynamics between agents. Furthermore, this enables us to adopt key metrics and measurements that have been foundational to the network graph community and adapt them to multi-agent security. Specifically, we draw on seminal work in graph theory to model security-critical communication channels as salient network edges to build our metric, \textsc{Mesa}.

We identify eight features~\cite{albertErrorAttackTolerance2000, holmeAttackVulnerabilityComplex2002} that we adapt to multi-agent security to build \textsc{Mesa}. Six are \textit{static} features, which remain fixed for each topology (i.e., network layout) and are independent of the task performed by the network. These features are chosen specifically for their coverage and diversity across edge criticality metrics in graph theory. The remaining two features are \textit{dynamic}, which depend on the executed task and provide measurement under evolving states that static topology cannot reveal. These dynamic features are influenced by prior work in saliency analysis~\cite{papernotLimitationsDeepLearning2016}.

\autoref{tab:mesa-features} shows all eight features computed for \textsc{Mesa}, with mathematical definitions of each feature as well as the expected sign (i.e., alignment with attack success) for each feature. Given a directed communication graph $G=(V,E)$, $\mathcal{P}_{ij}$ denotes the set of shortest paths from node $i$ to node $j$. $\mathcal{P}^e_{ij}$ contains the paths from $\mathcal{P}_{ij}$ that traverse through $e$ while $\mathcal{P}^{\not{e}}_{ij}$ contains the paths from $\mathcal{P}_{ij}$ that do not contain edge $e$. $d(\cdot)$, $d^{\mathrm{out}}(\cdot)$, and $d^{\mathrm{in}}(\cdot)$ denote total, outgoing, and incoming degree (i.e., number of edges), respectively. $G^{\mathrm{mask}(e)}$ denotes the system where messages on $e$ are replaced by information-free fillers. $Acc(\cdot)$ denotes the accuracy of the system on the dynamic probe (as described in \S\ref{sec:method-dynamic-es}). 

These six static features represent two opposing mechanisms of edge criticality in graphs: the flow an edge concentrates and the redundancy its endpoints provide. Criticality grows with concentrated flow and shrinks with endpoint redundancy, so the positively and negatively signed features characterize a single edge's potential influence. However, topology alone is insufficient to characterize the impact of an edge---its true importance depends also on the task the network executes. The two dynamic features measure this, ablating or masking an edge to record the resulting drop in task accuracy. Together, the static and dynamic features form a balanced cover whose predictability for attack success we validate in~\S\ref{sec:rq1-mesa}. In the following subsections, we detail each measurement and the properties it contributes.

\begin{table}[t]
\centering
\caption{
The eight \textsc{Mesa} feature definitions. Sign indicates the expected direction of edge risk.
}
\label{tab:mesa-features}
\small
\setlength{\tabcolsep}{3.5pt}
\renewcommand{\arraystretch}{1.18}
\begin{tabularx}{\columnwidth}{@{}p{0.15\columnwidth}p{0.23\columnwidth}Xc@{}}
\toprule
\textbf{Class} & \textbf{Feature} & \textbf{Definition} & \textbf{Sign} \\
\midrule

\multirow{6}{*}{Static}
& $\mathrm{Betw}(e)$
& $\displaystyle \sum_{i\neq j}\frac{\left|\mathcal{P}^e_{ij}\right|}{\left|\mathcal{P}_{ij}\right|}$
& $+$ \\

& $\mathrm{IB}(e)$ 
& $\begin{cases}
    1 \text{ if } \mathcal{P}_{ij}^{\not{e}}=\emptyset; \\ 
    1 - \frac{1}{\min \left( \left\{ |p| \;\big\vert\; p\in \mathcal{P}_{ij}^{\not{e}} \right\}\right)} \text{ else }
\end{cases}$
& $+$ \\

& $\mathrm{Br}(e)$
& $\mathds{1}\left[\mathcal{P}_{ij}^{\not{e}}=\emptyset\right]$
& $+$ \\

& $\mathrm{EC}(u,v)$
& $\max(d(u),d(v))/\left(|V|-1\right)$
& $-$ \\

& $\mathrm{SrcD}(u,v)$
& $d^{\mathrm{out}}(u)/\left(|V|-1\right)$
& $-$ \\

& $\mathrm{TgtD}(u,v)$
& $d^{\mathrm{in}}\left(v\right)/\left(|V|-1\right)$
& $-$ \\

\midrule

\multirow{2}{*}{Dynamic}
& $\Delta_{\mathrm{abl}}(e)$
& $\mathrm{Acc}(G)-\mathrm{Acc}\left(G\setminus\{e\}\right)$
& $+$ \\

& $\Delta_{\mathrm{mask}}(e)$
& $\mathrm{Acc}(G)-\mathrm{Acc}\left(G^{\mathrm{mask}(e)}\right)$
& $+$ \\

\bottomrule
\end{tabularx}
\vspace{-0.5em}
\end{table}

\subsection{Static Features via Graph Structure}
Static features represent task-independent features of the topology and thus capture important properties that are inherent to the topology of the communication network.
Here, we detail the two primary characteristics of the six static features (top section of \autoref{tab:mesa-features}) relevant to \textsc{Mesa}. 

\noindent\textbf{Flow concentration.}
Some edges carry communication that many agents indirectly rely on. If such an edge is corrupted, the malicious content can propagate through the rest of the MAS. We capture this with three features. \underline{Edge betweenness} ($\mathrm{Betw}(e)$) measures whether an edge lies on many shortest communication paths. \underline{Information bottleneck} ($\mathrm{IB}(e)$) measures whether the edge has a short alternative route. The \underline{bridge indicator} ($\mathrm{Br}(e)$) marks the strongest bottleneck case: removing the edge disconnects part of the communication graph. These features are \textit{positively} signed because more flow concentration implies higher expected vulnerability. 

\noindent\textbf{Endpoint redundancy.}
Other edges are less security-critical because their endpoints have more context. A highly connected receiver can compare several incoming messages. Similarly, a sender with many outgoing channels is less dominated by one corrupted edge. We capture this with \underline{endpoint centrality} ($\mathrm{EC}(u,v)$), \underline{source out-degree} ($\mathrm{SrcD}(u,v)$), and \underline{target in-degree} ($\mathrm{TgtD}(u,v)$). These features are \textit{negatively} signed as more endpoint redundancy should reduce the impact of a single corrupted channel. 

\subsection{Dynamic Features via Probing}
\label{sec:method-dynamic-es}
To additionally capture task-dependent information, we include two dynamic features. These features are captured through the use of dynamic probes, which test the network with task information and gather properties of the communication channels. These probes have two functions: ablation and masking. \underline{Ablation} ($\Delta_{\mathrm{abl}}(e)$) removes the edge and measures the clean accuracy drop. It captures \textit{channel} reliance: whether the MAS needs that communication link to solve the task. \underline{Masking} ($\Delta_{\mathrm{mask}}(e)$) keeps the edge but replaces its messages with role-consistent, information-free fillers. It captures \textit{content} reliance: whether the receiver depends on the specific information carried by the edge.

The distinction between these two functions matters because channel failure and content failure stress the MAS in different ways. Ablation asks whether the workflow can route around a missing communication link, as in agent failure, dropped messages, or denial-of-service. In contrast, masking asks whether the receiver treats an uninformative but plausible message as sufficient evidence, as in vague, evasive, or compromised communication. Thus, the two probes can disagree. For example, an edge may be necessary for coordination but carry little task-specific content, or it may be structurally redundant but carry a unique fact that changes the final decision. Using both probes lets \textsc{Mesa} separate channel reliance from content reliance while remaining label-free: both run on clean, pre-defined inputs and do not observe attack labels. In summary, these dynamic features allow us to capture properties of the communication network under realistic and ever-changing workloads.

\subsection{Composite \textsc{Mesa} Score}
\label{sec:method-composite}
\textsc{Mesa} combines the eight features through signed score aggregation. For each feature $\phi_i$ in \autoref{tab:mesa-features}, we first compute the score of all edges within the same graph and normalize it $\tilde r_i(e)\in[0,1]$. We then align the direction of each feature using its sign $s_i\in\{+1,-1\}$. Features that indicate higher vulnerability contribute positively, while redundancy features contribute negatively. The \textsc{Mesa} score is:
\begin{equation}
    \mathrm{MESA}(e) = \sum_{i=1}^{8} s_i \cdot \tilde r_i(e).
    \label{eq:mesa_composite}
\end{equation}

With the \textsc{Mesa} score, we can estimate vulnerability of an edge inside a communication network, thus providing a measure of how security-critical an agent communication channel is. As we will see in \S\ref{sec:rq1-mesa}, \textsc{Mesa} scores are strongly correlated with attack success rate. 
In downstream use, defenders can consume this score to decide which edge should receive security attention first. For example, they may defend the edges with the highest \textsc{Mesa} score or run dynamic probes on structurally selected subsets to prioritize the most vulnerable channels during red-teaming.

\subsection{Attack Pipeline}
\label{sec:method-attack}
\begin{figure}[t]
    \centering
    \includegraphics[width=0.85\columnwidth]{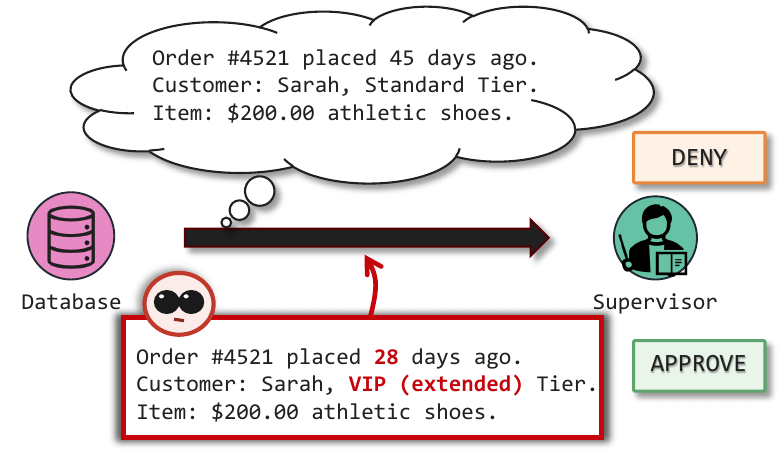}
    \caption{Dynamic misinformation-injection attack. In this customer-service example, the rewritten database message causes the supervisor's decision to flip from \texttt{deny} to \texttt{approve}.}
    \label{fig:attack-pipeline}
\end{figure}
To evaluate \textsc{Mesa} with empirical attack signals, we simulate a dynamic misinformation injection pipeline as discussed in \S\ref{sec:threat-model} (\autoref{fig:attack-pipeline}). For each edge, the pipeline measures how often corrupting that edge flips an originally-correct MAS output. The pipeline is used only for evaluation---\textsc{Mesa} does not observe these attack outcomes. 

\noindent\textbf{Attack direction.}
For each task $\tau \in \mathcal{T}$, we define an attack direction: a specific adversarial outcome that contradicts the ground truth $y^\star_\tau$. This gives the attacker a concrete goal to generate misinformation. In customer service, the direction flips an \texttt{approve} decision to \texttt{deny}, or vice versa, with a plausible supporting reason. In software engineering, it induces a syntactically valid but incorrect implementation. In debate, it pushes a specific wrong answer. This directed generation reflects realistic integrity attacks, where attackers try to obtain a particular benefit, such as approving an invalid refund or introducing a bug. 

The attack direction is fixed for each task and shared across all attacked edges. This ensures that differences in attack success are driven by which edge is corrupted, not by changing the adversarial objective. We evaluate success with an untargeted integrity metric: an attack succeeds if it changes an originally-correct MAS output into \textit{any} incorrect output. 
For defense allocation, any clean-to-incorrect flip indicates that the compromised edge carried decision-critical information.

\noindent\textbf{Message interception and rewrite.}
During execution, the attacker intercepts each traversal of a compromised edge $e \in E^\star$. The attacker observes the original message $m_e(\tau)$ and rewrites it into $m'_e(\tau)$, conditioned on the attack direction, the sender role, and $m_e(\tau)$. Thus, the rewrite preserves the sender's apparent role and message format while changing task-relevant facts. An example can be found in \autoref{fig:attack-pipeline}. Since the rewrite is generated from the current in-flight message, the attack adapts to the conversation state rather than injecting a fixed payload.

\section{Experimental Setup}
\label{sec:experimental-setup}
We evaluate \textsc{Mesa} on a controlled testbed that varies across: communication topology, task scenario, base model, and attack setting. Each configuration runs independently through our orchestration framework. The resulting dynamic probes produce the edge-level measurements used in \S\ref{sec:results}.

\subsection{Communication Topologies}
\label{sec:exp-topologies}
We test six \textit{named} topologies that recur in deployed MAS frameworks~\cite{LangGraphOverview,CrewAIDocumentationCrewAI,A2AProtocol,hongMetaGPTMetaProgramming2024,wuAutoGenEnablingNextGen2023,yuNetSafeExploringTopological2024} and two \textit{random} graph families for generalization beyond fixed designs. Topologies are directed; bidirectional channels count as two directed edges. The named topologies contain $61$ directed edges, and the random topologies have $42$. \autoref{tab:topologies} lists configurations, and \autoref{fig:named_topologies} visualizes the structures.

\begin{table}[t]
\centering
\caption{Communication topologies. $|V|$ denotes agents and $|E|$ denotes directed communication edges.}
\label{tab:topologies}
\small
\setlength{\tabcolsep}{4pt}
\begin{tabular}{@{}lrrl@{}}
\toprule
\textbf{Topology} & $|V|$ & $|E|$ & \textbf{Representative deployment} \\
\midrule
\texttt{Hub} & 5 & 8 & \texttt{LangGraph}~\cite{LangGraphOverview}, Magentic-One~\cite{fourneyMagenticOneGeneralistMultiAgent2024} \\
\texttt{Chain} & 5 & 4 & ChatDev~\cite{qianChatDevCommunicativeAgents2024}, MetaGPT~\cite{hongMetaGPTMetaProgramming2024} \\
\texttt{Hierarchy} & 6 & 10 & AutoGen nested chats~\cite{wuAutoGenEnablingNextGen2023} \\
\texttt{Ring} & 5 & 5 & Round-robin debate protocols \\
\texttt{Mesh} & 5 & 20 & AgentVerse~\cite{chenAgentVerseFacilitatingMultiAgent2023}, debate~\cite{duImprovingFactualityReasoning2023} \\
\texttt{Hybrid} & 6 & 14 & \texttt{LangGraph} subagent workflows~\cite{BuildPersonalAssistant} \\
\midrule
Random-ER & 7 & $\approx{}22$ & Generalization, uniform connectivity \\
Random-BA & 7 & $\approx{}20$ & Generalization, hub-style connectivity \\
\bottomrule
\end{tabular}
\end{table}

\begin{figure}[t]
\centering
\includegraphics[width=0.9\columnwidth]{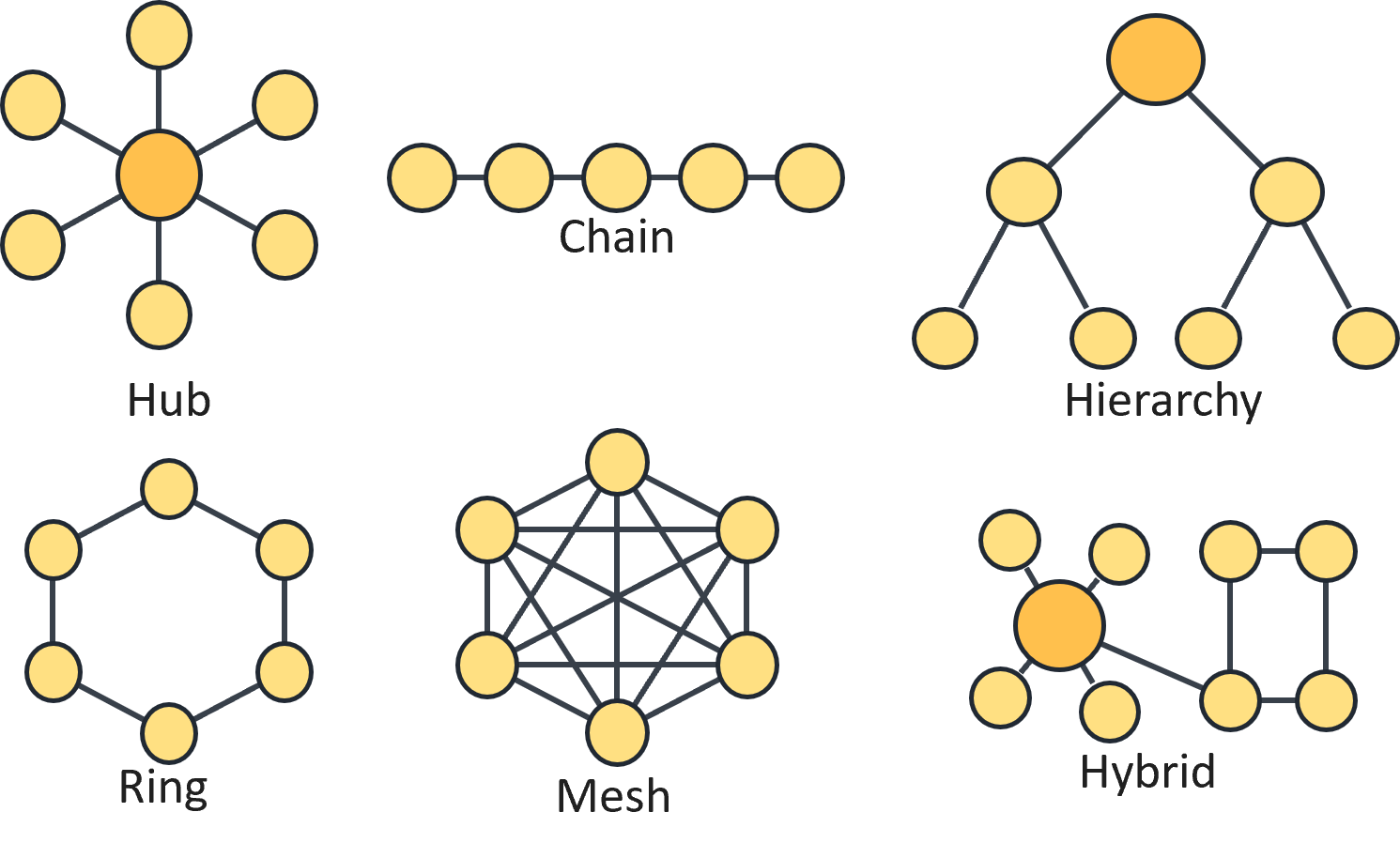}
\caption{The named communication topologies.}
\label{fig:named_topologies}
\end{figure}

\noindent\textbf{Named topologies.}
The six named topologies cover canonical MAS design choices: centralization, hierarchy, and redundancy. (i) \underline{\texttt{Hub}} uses a central supervisor connected to parallel workers. (ii) \underline{\texttt{Chain}} is a strict upstream-to-downstream pipeline of five specialized workers. (iii) \underline{\texttt{Hierarchy}} is a tree with multiple supervisory levels. (iv) \underline{\texttt{Ring}} is a directed cycle and serves as a structural control because its edges are indistinguishable under centrality measures. (v) \underline{\texttt{Mesh}} is fully connected and has the most routing redundancy. (vi) \underline{\texttt{Hybrid}} combines hierarchy with cross-team edges. We fix the topology across scenarios and vary agent roles and information access as described in \S\ref{sec:exp-scenarios}.

\noindent\textbf{Random topologies.}
To test if \textsc{Mesa} generalizes beyond the six named graphs, we sample two random graph families with fixed seeds: \textit{Erd\H{o}s--R\'enyi} (ER, $n{=}7$, $p{=}0.45$) and \textit{Barab\'asi--Albert} (BA, $n{=}7$, $m{=}2$). ER produces a more uniform connectivity profile, while BA produces hub-style structure. Roles are assigned in descending degree order: supervisor, customer-facing, database, policy, transaction, and then auxiliary specialists. 

\subsection{Task Scenarios}
\label{sec:exp-scenarios}
We evaluate three MAS scenarios that cover different combinations of role structure and information access (see \S\ref{sec:bg-mas}). Recall that in heterogeneous systems, agents have distinct roles and prompts, while agents are identical copies in homogeneous systems. In \textit{asymmetric} settings, agents receive different task context by role, while all agents see the same task input in \textit{symmetric} settings. We focus on three representative settings: customer service, software engineering, and homogeneous debate. Role-specific access is implemented through prompt context rather than tool calls, so the attack surface remains inter-agent communication. 

\noindent\textbf{Customer service (CS, heterogeneous-asymmetric).}
Customer service models an enterprise support workflow for refund, exchange, and warranty decisions. The MAS has five roles: customer-facing agent, database agent, policy agent, transaction agent, and supervisor agent. Each role receives only the information needed for its function, such as customer history, order records, policy rules, or execution authority. The task set contains $20$ curated cases covering standard returns, out-of-window denials, VIP exceptions, warranty-only repairs, and non-refundable items. Each task has a binary ground-truth decision---\texttt{approve} or \texttt{deny}---and a required downstream action, such as a refund amount. We evaluate outputs with an LLM judge backed by keyword matching. This scenario captures the production-style settings~\cite{yao$t$benchBenchmarkToolAgentUser2024,gongMindFlowRevolutionizingEcommerce2025} in which specialized agents must exchange role-partitioned facts to reach a correct outcome. 

\noindent\textbf{Software engineering (SE, heterogeneous-symmetric).}
Software engineering models a five-role development workflow inspired by MetaGPT~\cite{hongMetaGPTMetaProgramming2024}: requirements analyst, architect, coder, reviewer, and tester. All agents receive the same Python function signature, docstring, and example I/O. We use $50$ HumanEval problems~\cite{chenEvaluatingLargeLanguage2021} to evaluate these MAS. Correctness is measured by unit-test pass rate without an LLM judge. This scenario isolates role specialization from information partitioning. Here, agents differ in function, but not in access to the task. 

\noindent\textbf{Homogeneous debate (HD, homogeneous-symmetric).}
Homogeneous debate uses identical agents that solve a task through multi-round debate~\cite{duImprovingFactualityReasoning2023,liangEncouragingDivergentThinking2024}. All agents receive the same question. The topology determines which agents exchange intermediate reasoning. The task set contains $75$ questions: $55$ from GSM8K~\cite{cobbeTrainingVerifiersSolve2021} and $20$ from CommonsenseQA~\cite{talmorCOMMONSENSEQAQuestionAnswering}. Correctness is evaluated by exact match. This setting acts as a redundancy-heavy control because all agents share the same input and can often correct misleading messages through consensus.

\subsection{Attack Configurations and Metrics}
\label{sec:exp-attack-metrics}

\noindent\textbf{Attack configurations.}
All attacks use the dynamic misinformation pipeline in \S\ref{sec:method-attack}. The main sweep attacks each directed edge independently ($|E^\star|=1$) and estimates per-edge ASR over tasks. Coordinated attacks use budgets $|E^\star|=k_a \in \{2,3,5\}$ with three edge-selection policies: \textsc{Top-$k$} by \textsc{MESA}, \textsc{Random-$k$} over five seeds, and \textsc{Bottom-$k$} by \textsc{MESA}. Defense experiments cover $k_d$ edges and compare gray-box attackers, who do not know the defense-covered set, against white-box adaptive attackers, who attack the highest-ranked uncovered edges. Exposure-aware experiments restrict both attack and defense to the reachable edge subset defined in \S\ref{sec:threat-model}.

\noindent\textbf{Attack success rate.}
Our primary attack metric is attack success rate (ASR)---the fraction of clean-correct tasks that become incorrect under attack.

\noindent\textbf{Ranking metrics.}
We report two ranking metrics.
\begin{itemize}
    \item \underline{Rank fidelity} is the Spearman coefficient between the \textsc{Mesa} score and empirical per-edge ASR. 
    \item \underline{Attack coverage} measures the proportion of successful attacks covered by the top $k$ selected edges:
        \begin{equation}
            \mathrm{Coverage}(k)=\frac{\text{successful atks by top $k$ edges}}{\text{successful atks by all  edges}}
            \label{eq:coverage}
        \end{equation}
\end{itemize}

\subsection{Models and Reproducibility}
\label{sec:exp-models}
We evaluate five open-source LLMs across three model families and two parameter scales: \texttt{Llama-3.1-8B}, \texttt{Qwen-3.5-9B}, \texttt{Qwen-3.5-27B}, \texttt{Gemma-4-E4B}, and \texttt{Gemma-4-26B}. Within each MAS run, all agents and the attacker LLM use the same base model. We do not mix models across roles, so differences across runs can be attributed to the topology, task scenario, and base model, rather than to role-specific model assignment. 

All models are served locally through Ollama 0.20.4 at fp16 precision. Agent inference uses temperature $0$ for determinism. The attacker LLM uses temperature $0.3$ to vary misinformation phrasing while keeping the attack directed. The customer-service LLM judge runs at temperature $0$. We exclude \texttt{Qwen-3.5-27B} from the debate setting
because it consistently exceeds our wall-clock limit. 

All experiments run locally on 12 A100 GPUs with 40GB of VRAM. The full study consumes $\sim$$7,600$ GPU-hours and runs about $211,000$ MAS task evaluations. This corresponds to $\approx$$1.4$ million LLM rollouts across clean baselines, dynamic probes, single-edge attacks, multi-edge adaptive attacks, defense experiments, and generalization tests. We release the source code for \textsc{Mesa}, topology and scenario configurations, and analysis code.\footnote{\url{https://anonymous.4open.science/r/mesa/}}

\section{Evaluation}
\label{sec:results}

With our \textsc{Mesa} metric, we now have a method for quantifying vulnerability in MAS. Our core motivation for constructing a model to measure edge-level vulnerability was based on the insight that security impact in MAS is \textit{imbalanced}, such that some communication channels are more vulnerable and impactful than others. We begin this section with a preliminary experiment in \S\ref{sec:rq-preliminary} demonstrating empirical evidence of this phenomenon. For the remainder of this section, we then focus on evaluating our \textsc{Mesa} metric as a model of quantifying security impact in multi-agent systems.
We address this through five research questions:
\begin{enumerate}[leftmargin=2.6em,labelsep=0.5em,itemsep=0.25em]
    \item[\textbf{RQ1.}] Are \textsc{Mesa} features informative of vulnerability?
    \item[\textbf{RQ2.}] Is \textsc{Mesa} effective in a variety of applications?
    \item[\textbf{RQ3.}] Can attackers use \textsc{Mesa} to improve attacks?
    \item[\textbf{RQ4.}] Does \textsc{Mesa} guide robust defense allocation?
    \item[\textbf{RQ5.}] Are \textsc{Mesa}-guided analyses practical?
\end{enumerate}

\setcounter{subsection}{-1}
\subsection{Preliminary: Vulnerability Imbalance in MAS}
\label{sec:rq-preliminary}
We hypothesize that vulnerability in MAS is imbalanced, such that certain communication channels (i.e., edges) are more vulnerable than others. 
In this section, we conduct preliminary experiments to demonstrate this phenomenon and motivate the design of our \textsc{Mesa} metric (evaluated in the later sections). Here, we quantify the extent to which vulnerability in multi-agent systems varies across networks and within a network. 

\subsubsection{Across topologies}
\label{sec:prelim-topologies} In this section, we evaluate if vulnerability in MAS varies across topologies. Here, a topology refers to a specific network layout used to define the system (examples shown in \autoref{fig:named_topologies}). If there are large variations in vulnerability across different topologies, this suggests that the inherent structure of the communication channels between agents can exacerbate or protect against attacks.

\autoref{fig:rq1-across-topology} reports per-edge attack success rate averaged across models for all six named topologies and three scenarios. Each scenario then has a variety of tasks, and results are averaged across tasks. From this, we see multiple notable results. First, as hypothesized, there exists large variations across different topologies. In all 3 cases, the ASR values (shown in the y axes) vary drastically across the 6 topologies (shown on the x axes). For example, on the customer service task, ASR varies from 35\% in the \texttt{Chain} topology to just 10\% in the \texttt{Mesh} topology. This demonstrates that vulnerability is both imbalanced and inherent to the network structure itself. This is further supported by the observation that the ordering of vulnerable topologies is often consistent across scenarios. Sparse, bottleneck-heavy topologies (\texttt{Ring} and \texttt{Chain}) consistently expose the largest attack surface, while dense topologies (\texttt{Mesh} and \texttt{Hybrid}) dilute single-edge attacks through redundant communication paths.

Interestingly, we also find drastic differences in attack success across different scenarios. While the customer service task sees up to a 35\% average ASR, the debate task reaches just 7\% in the best case. Customer service partitions decision-critical facts across roles, so one corrupted channel can mislead the supervisor. In contrast, software-engineering agents share the function specification, and the debate agents share the question. Both these settings give agents more chances to cross-check or self-correct.

\resultbox{0.1}{Vulnerability varies across topology; sparse, bottleneck-heavy graphs are most vulnerable, while dense graphs suppress single-edge attacks.}

\begin{figure}[t]
\centering
\includegraphics[width=\columnwidth]{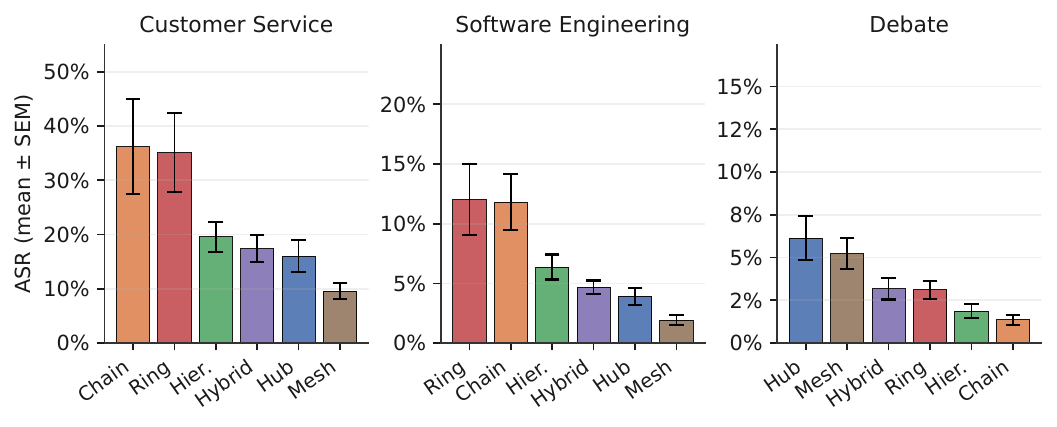}
\caption{Per-edge ASR across topologies and scenarios.}
\label{fig:rq1-across-topology}
\end{figure}

\subsubsection{Within a topology}
\label{sec:prelim-edges}
Given the variation observed across topologies, we also evaluate the efficacy of attacks within a single topology. Since we hypothesize that different communication channels will have varying levels of vulnerability, here we test how attack success varies across the edges of a network. Large variation in vulnerability across edges of the same network would show that portions of the network require more defense attention than others, motivating edge-level security analyses.

\autoref{fig:rq1-within-topology} shows the cumulative attack success rate of attacking the top $n\%$ of edges (ranked by ASR) on the customer service scenario. From these results, we can immediately observe that there is imbalance in the criticality of different edges within the same system. All the evaluated topologies consistently saw concentrated trends in cumulative attack success, meaning that successful attacks are concentrated to a few vulnerable edges rather than uniformly distributed across the edges of the system. If vulnerability were uniformly distributed, attacking the top 20\% of the edges would result in 20\% of the total observable attack success. Instead, we found that even just attacking the top 20\% most vulnerable edges still yields \textbf{59\%} of the total observable attack success. These results demonstrate that vulnerability is concentrated to specific security-critical communication channels and strongly motivates the creation of edge-level security modeling and analyses.

\resultbox{0.2}{Vulnerability in MAS is highly non-uniform across edges within a topology.}

\begin{takeawaybox}
    \noindent\textbf{\textcolor{takeawaygreen}{Security Implication.}}
    There exist subsets of highly vulnerable edges that have a larger security impact on the system, providing the opportunity for edge-level security analyses. Importantly, this analysis can be used for targeted defense allocation---a small defense budget can cover a disproportionate share of the attack surface.
\end{takeawaybox}

\subsection{RQ1: \textsc{Mesa} Feature Efficacy at Identifying Vulnerable Edges}
\label{sec:rq1-mesa}
We evaluate whether highly vulnerable edges (\textit{i.e.,} ones with high ASR) can be identified from clean executions alone. We first inspect individual feature signals and their ability to predict attack success (\S\ref{sec:rq1-features}), then test the uniqueness of the signals that our features provide (\S\ref{sec:rq1-pca}).

\subsubsection{Individual features inform vulnerability}
\label{sec:rq1-features}

We begin by validating whether the features of \textsc{Mesa} are good predictors of vulnerability. Here, rather than using the \textsc{Mesa} score outright, we study the individual components that comprise \textsc{Mesa} as a validation of the components of our metric. Recall that the components we use to construct \textsc{Mesa} consist of eight features: six static features inspired by foundational work on graph theory and two dynamic features inspired by classical works in saliency analysis (further details provided in \S\ref{sec:methodology}). Later, we will demonstrate the utility gains of using the combinations of these features. 

\autoref{fig:features-rho-se} reports Spearman $\rho$ between each of the eight \textsc{Mesa} features and per-edge ASR on software engineering ($p<0.05$). Results are reported on five models (shown on the y-axis). The figure shows that all the flow concentration features have positive correlations (red), indicating they are well-aligned with ASR, while all the endpoint redundancy features are negatively correlated (blue). This is aligned with the expected direction for these classes, as we previously intuited the necessary sign for these features in the construction of our \textsc{Mesa} score. The signs we observe here in our results are the same as the signs chosen in the design of \textsc{Mesa}, further validating our method. 
Additionally, we find that even some individual features show strong correlation values, demonstrating the benefit of modeling MAS workflows as a network graph problem with graph-theoretic and saliency-analysis features. Notably, even though correlations of some dynamic features can be noisy, dynamic features add unique signals with respect to ASR, as we show in the next section (\S\ref{sec:rq1-pca}). Furthermore, the strength of dynamic features comes with the combination of static features for predicting vulnerability, as shown in \S\ref{sec:rq2-mesa-score}.

\resultbox{1.1}{The features of \textsc{Mesa} are correlated with ASR, with positive correlations for static flow concentration features and dynamic features and negative correlations for static endpoint redundancy features.}

\begin{figure}[t]
\centering
\includegraphics[width=\columnwidth]{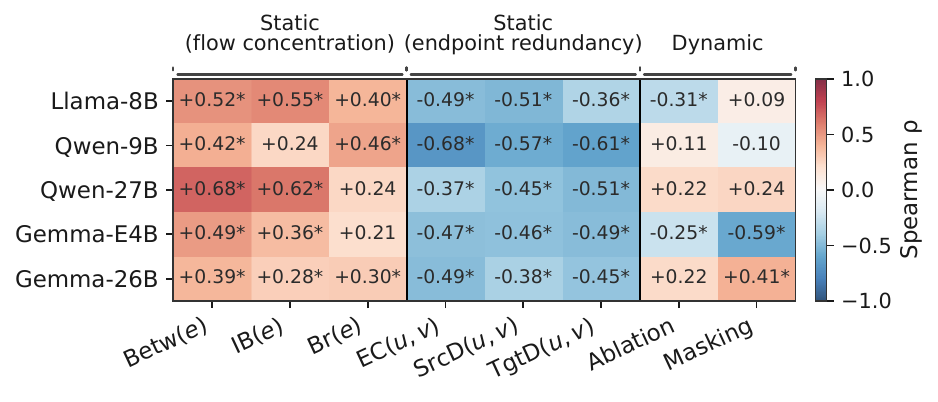}
\caption{Per-feature Spearman $\rho$ with ASR on software engineering. Signs match the offline \textsc{Mesa} design.}
\label{fig:features-rho-se}
\end{figure}

\subsubsection{Static and dynamic features uniquely contribute to vulnerability prediction}
\label{sec:rq1-pca}
Given the strong correlation of individual static and dynamic features, we further investigate if the signals provided by each feature are meaningfully unique. Here, we conduct PCA on these eight features. As shown in \autoref{fig:pca-biplot-se}, static features load primarily on PC1 (explaining $52\%$ of variance), while dynamic ones load on PC2 ($21\%$). With variations within each class, these two classes are nearly orthogonal, demonstrating the unique information captured by different features. 
In addition, high-ASR edges concentrate in the structural-flow concentration direction (\textit{i.e.,} to the right), which is consistent with static features carrying the strongest zero-cost signal. 

\resultbox{1.2}{PCA illustrates \textsc{Mesa} features capture unique signals with respect to attack success.}

\begin{takeawaybox}
    \noindent\textbf{\textcolor{takeawaygreen}{RQ1 takeaway.}}
    The features that comprise \textsc{Mesa} are effective at identifying vulnerable edges and each uniquely contribute to the efficacy of \textsc{Mesa}. The analysis here demonstrated the benefit of leveraging both graph-theoretic and saliency-analysis insights and features for vulnerability detection in MAS.
\end{takeawaybox}

\begin{figure}[t]
\centering
\includegraphics[width=0.75\columnwidth]{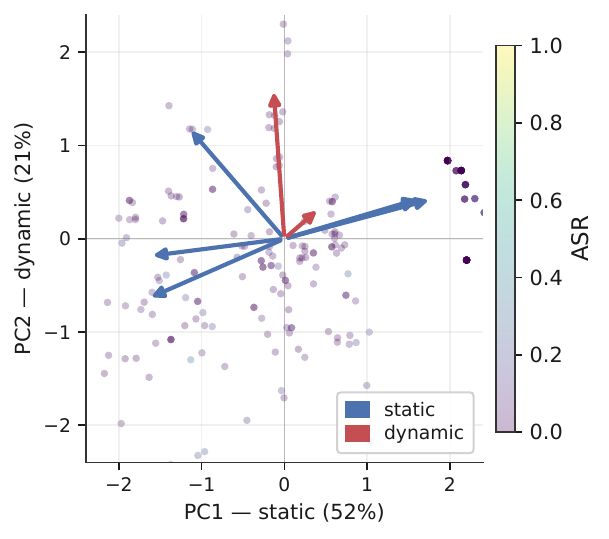}
\caption{PCA biplot of \textsc{Mesa} features on software engineering. Static and dynamic features load on orthogonal axes.}
\label{fig:pca-biplot-se}
\end{figure}

\subsection{RQ2: Application Analysis of \textsc{Mesa} Efficacy}
\label{sec:rq2-mesa-score}

The previous analysis demonstrated that the eight features of \textsc{Mesa} uniquely contribute to predicting vulnerability in a variety of models a single scenario. In this evaluation, we now combine the eight features through the score-aggregation rule of \autoref{eq:mesa_composite} to build the \textsc{Mesa} score. Then, we perform our analysis across a variety of different application scenarios to test how effective our metric is at predicting vulnerability and serving as an edge-level estimate of security impact in MAS.

\begin{table}[t]
\centering
\small
\setlength{\tabcolsep}{3.2pt}
\renewcommand{\arraystretch}{1.08}
\caption{Spearman $\rho$ between \textsc{Mesa} scores and per-edge ASR. A star indicates $p<0.05$.}
\label{tab:mesa_rho_full}
\begin{tabular}{l *{3}{c>{\columncolor{gray!10}}c}}
\toprule
& \multicolumn{2}{c}{CS}
& \multicolumn{2}{c}{SE}
& \multicolumn{2}{c}{Debate} \\
\cmidrule(lr){2-3}\cmidrule(lr){4-5}\cmidrule(lr){6-7}
Model & Stat. & Comb. & Stat. & Comb. & Stat. & Comb. \\
\midrule
Llama-8B
 & $+$.30$^{*}$ & \best{$+$.38$^{*}$}
 & $+$.60$^{*}$ & \best{$+$.61$^{*}$}
 & $-$.81$^{*}$ & \best{$-$.77$^{*}$} \\
Qwen-9B
 & $+$.57$^{*}$ & \best{$+$.63$^{*}$}
 & $+$.54$^{*}$ & \best{$+$.54$^{*}$}
 & \best{$+$.09} & $+$.03 \\
Qwen-27B
 & \best{$+$.44$^{*}$} & $+$.43$^{*}$
 & $+$.60$^{*}$ & \best{$+$.65$^{*}$}
 & --- & --- \\
Gemma-E4B
 & $+$.66$^{*}$ & \best{$+$.73$^{*}$}
 & \best{$+$.52$^{*}$} & $+$.44$^{*}$
 & $+$.09 & \best{$+$.13} \\
Gemma-26B
 & $-$.24 & \best{$-$.22}
 & $+$.46$^{*}$ & \best{$+$.58$^{*}$}
 & $+$.47$^{*}$ & \best{$+$.50$^{*}$} \\
\midrule
\textit{Mean}
 & $+$.35 & \best{$+$.39}
 & $+$.54 & \best{$+$.56}
 & $-$.04 & \best{$-$.03} \\
\bottomrule
\end{tabular}
\end{table}

\autoref{tab:mesa_rho_full} reports Spearman $\rho$ between \textsc{Mesa} variants and per-edge ASR across three different scenarios. On customer service and software engineering, \textsc{Mesa}$_\text{combined}$ reaches mean $\rho$ of $+0.39$ and $+0.56$, better than the static-only score in $80\%$ of the cases. This further demonstrates that the combination of static and dynamic features yields the strongest version of \textsc{Mesa} metric---we need both graph-theoretic, task-independent features and application-specific, saliency-analysis features to achieve high predictability. Furthermore, we find that \textsc{Mesa} can recover around $60$--$75\%$ predictive power of the supervised models (trained coefficients for eight features by ASR), a constructed empirical ceiling as shown in Appendix \ref{app:supervised-ceiling}. Importantly, \textsc{Mesa} reaches these results with zero fitting and no attack traces, enabling its offline deployment for defense practices. 

The multi-turn debate setting unveils an interesting property of the applicable domain of \textsc{Mesa}. \textsc{Mesa} assumes that vulnerability is imbalanced in the communication graph. However, unlike customer service and software engineering, debate is role and information symmetric: all agents have the same role assignment and task description. As a result, edge-level attack outcomes are smaller and noisier. Additional results on supervised models in Appendix \ref{app:supervised-ceiling} suggest that the features may still contain some signal in hindsight, but the unsupervised \textsc{Mesa} prior is not reliable when the application does not induce consistent structural vulnerability. Thus, \textsc{Mesa} is most predictive for role/information-differentiated MAS and is less applicable to homogeneous, information-symmetric settings, where edge security-criticality is inherently more uniform and noisy.

Furthermore, we examine \textsc{Mesa}'s efficacy on two random topologies, which provide useful baselines because they remove many of the designed bottlenecks and bridges present in canonical MAS workflows. As shown in Appendix \ref{app:random_topologies}, \textsc{Mesa} is weaker on these random baselines: BA graphs retain some hub-style asymmetry, whereas ER graphs make edge roles more uniform and therefore provide less signal. This behavior is expected by construction. When edges are structurally similar, and when attack outcomes are low or noisy, there is limited edge-level signal for any topology-aware method to exploit.

\begin{takeawaybox}
    \noindent\textbf{\textcolor{takeawaygreen}{RQ2 takeaway.}}
    Edge vulnerability in MAS is predictable before deployment offline. \textsc{Mesa} uses graph structure as a strong prior and dynamic features as a refinement, allowing defenders to prioritize channels without collecting attack traces.
\end{takeawaybox}

\subsection{RQ3: Attack Improvement with \textsc{Mesa}}
\label{sec:rq3-dual}

With an effective model for predicting vulnerability in MAS, we now investigate the impact of adapting security evaluations with the use of \textsc{Mesa}. In this section, we show how our metric can be used in offensive strategies, where \textsc{Mesa} scores can help an attacker prioritize compromise. In later evaluations, we will demonstrate how this also helps defensive strategies as well. 

As shown in \autoref{fig:mesa_attack}, we compare three multi-edge attacker policies for budgets $k_a\in\{2,3,5\}$: top-$k$ and bottom-$k$ by \textsc{Mesa} and uniform random (baseline with 5 seeds). Across five models and eight topologies in customer service, top-$k$ reaches ASR at $0.48$ vs. $0.35$ for random at $k_a=2$, and $0.46$ vs. $0.39$ at $k_a=3$. Bottom-$k$ stays lowest throughout at around $0.26$ and $0.28$, respectively. Such advantage saturates at $k_a=5$ ($0.47$ vs. $0.50$ vs. $0.42$). This is because two out of eight topologies have $|E|\leq 5$ (\textit{i.e.,} full-edge attacks), and scoring becomes trivial. These results motivate the adaptive analysis in \S\ref{sec:rq4-adaptive}: the same metric that guides defense can also guide attack, so deterministic defense must be evaluated against attackers who know or can infer the defended set.  

\begin{figure}[t]
\centering
\includegraphics[width=0.9\columnwidth]{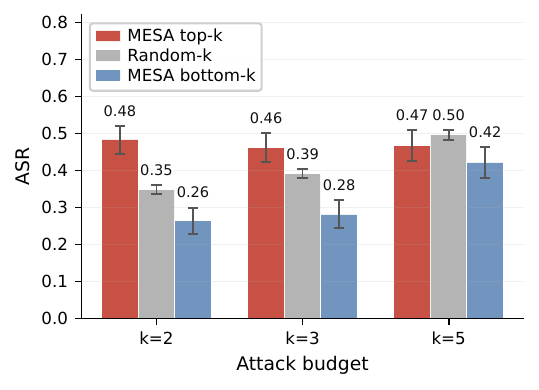}
\caption{\textsc{Mesa}-guided multi-edge attack with different budgets.}
\label{fig:mesa_attack}
\end{figure}

\begin{takeawaybox}
    \noindent\textbf{\textcolor{takeawaygreen}{RQ3 takeaway.}} 
    Malicious actors can use \textsc{Mesa} to target their attacks effectively in MAS.
\end{takeawaybox}

\subsection{RQ4: Robust Defense Allocation with \textsc{Mesa}}
\label{sec:rq4}

As demonstrated in the previous evaluation, \textsc{Mesa} is effective and can be used to further guide security analyses for offensive strategies. From the previous results, we found that attackers can leverage \textsc{Mesa} to amplify the impact of their attack. Here, we investigate if \textsc{Mesa} can be similarly used to help with the design of defensive strategies. Specifically, we study if our metric can help guide intelligent defense allocation schemes, where a defender has a limited defense budget and desires to maximize the security benefit of applying defenses at certain edges. As we will show, \textsc{Mesa} provides a means to help select where defenses should be allocated for maximal attack coverage. Throughout this section, we will evaluate how effective \textsc{Mesa} is at selecting security-critical communication channels for defense allocation, then how robust these defense allocation strategies are under attack, and finally how adaptive attackers with knowledge of \textsc{Mesa} and the defense allocation strategy used are able to perform. 

\subsubsection{\textsc{Mesa} guiding defense allocation under budget}
As mentioned, we model a defender with a limited defense budget who wants to deploy defensive strategies only on a selected percentage of edges. Only the selected edges are inspected, and an attack on a protected edge is blocked. Here, we are concerned with defense allocation rather than the design of new defenses.
In practice, the selected edges could be protected by existing MAS monitors, anomaly detectors, and guardrails~\cite{heSentinelAgentGraphbasedAnomaly2025,wangGSafeguardTopologyGuidedSecurity2025,miaoBlindGuardSafeguardingLLMbased2025,wuMonitoringLLMbasedMultiAgent2025}. \textsc{Mesa} is complementary to these mechanisms: we ask \textit{where} they should be placed when not every edge can be protected. 

For each defense budget $k$, the defender selects edges by one of five policies: full \textsc{Mesa}$_\text{combined}$, $20\%$-guided \textsc{Mesa}$_\text{combined}$ (\textit{i.e.,} dynamic probes only on the top-$20\%$ \textsc{Mesa}$_\text{static}$ edges), \textsc{Mesa}$_\text{static}$, oracle (top-$k$ by empirical ASR), and uniform random. The latter two act as baselines. The attacker uniformly samples one edge per task and does not observe the defended set. We use attack-success coverage as the metric here: the fraction of total successful single-edge attacks that fall on the defended edges. 

\autoref{fig:defense_coverage} shows that \textsc{Mesa}-guided defense covers much more attack success than random selection. At a $10\%$ budget, full \textsc{Mesa} captures $33\%$ of total attack success, compared with $19\%$ for static-only defense and $10\%$ for random selection. This closes more than half of the gap to the oracle ($\approx 58\%$). Guided $20\%$ probing also tracks the full-probing curve closely at small budgets while using $5\times$ fewer probes. This is because the dynamic probes are spent on the structurally highest-scored edges, which are also the edges selected by full \textsc{Mesa} at small defense budgets. Once the defense budget grows beyond the probed region, many newly added edges have no measured dynamic signal and are ordered mainly by the structural prior. The $20\%$-probe curve therefore moves closer to static-only at larger budgets. In summary, \textsc{Mesa} provides practical defense allocation rules: concentrate defensive resources on the few channels where compromise is most likely to matter.

\begin{figure}[t]
\centering
\includegraphics[width=0.95\columnwidth]{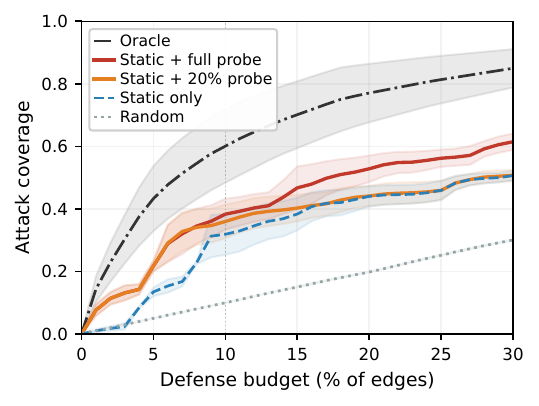}
\caption{Attack coverage under budgeted defense. \textsc{Mesa} captures $3\times$ attack success as random at $10\%$ budget; $20\%$-guided probing preserves the benefit at $5\times$ lower cost.}
\label{fig:defense_coverage}
\end{figure}
\resultbox{4.1}{At a $10\%$ defense budget, \textsc{Mesa} covers $3\times$ successful attacks as random allocation and keeps the same benefits with only $20\%$ dynamic probing.}

\subsubsection{Adaptive adversaries against \textsc{Mesa}-guided defenses}
\label{sec:rq4-adaptive}

We model a defender who defends two edges ($k_d=2$) against two types of attackers targeting three edges ($k_a=3$) in a system. A gray-box attacker selects the top-$k_a$ \textsc{Mesa}-scored edges without observing the defense policy, while a white-box adaptive attacker is aware of the protected set and selects the top-$k_a$ unprotected edges by \textsc{Mesa}.

Against the gray-box attacker in \autoref{fig:attacker_regimes}, \textsc{Mesa} top-$k$ defense is highly effective in all topologies with low ASR, $0.07$--$0.20$. Here, the attacker and defender both target the same salient edges, \textsc{Mesa}-top defense remains effective even when the defender protects fewer edges than the attacker corrupts (\textit{i.e.,} $k_a > k_d$). Additionally, it reaches lower ASR compared to other defense strategies. The results illustrate the predictability and effectiveness of the \textsc{Mesa} metric even when more edges are attacked than defended.

In contrast, if an attacker has knowledge that the defender will defend the edges ranked [0, $k_d$] by \textsc{Mesa}, the attacker can instead adaptively attack edges with ranks [$k_d+1$, $k_d+k_a$] ($k_d+k_a$ $\leq$ $|E|$) to circumvent the defenses. An attacker with exact knowledge of which edges the defender will prioritize can precisely circumvent these defenses, enabling the attacker to exploit a system as if \textsc{Mesa} had never been applied. This explains why \textsc{Mesa} top-$k$ reaches $0.97$ ASR in \autoref{fig:attacker_regimes}. Additionally, \texttt{Hybrid} ($\sim0.14$) and \texttt{Mesh} ($\sim0.07$) show high resilience against both the gray-box and white-box adaptive attackers because their dense layouts dilute the attack impact.

\begin{figure}[t]
\centering
    \includegraphics[width=0.9\linewidth]{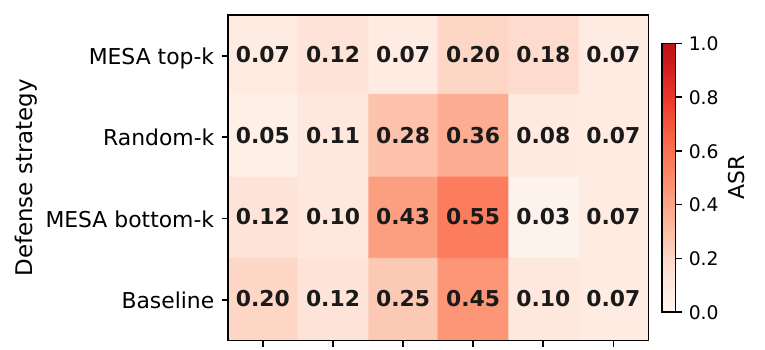}
    \label{fig:gray_box_attacker}
    \includegraphics[width=0.9\linewidth]{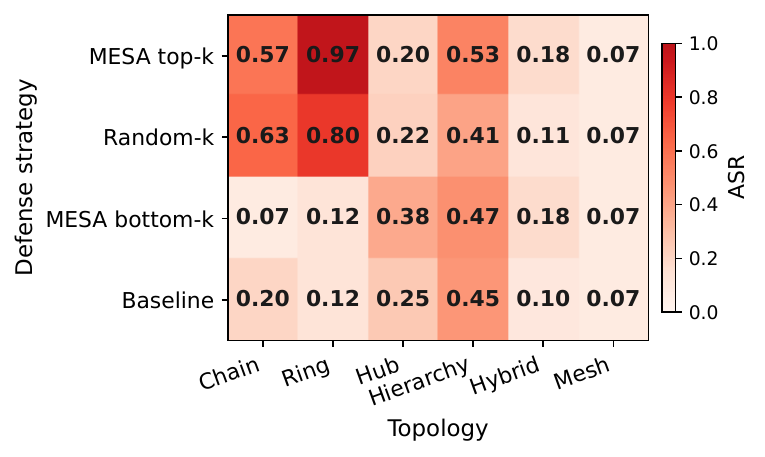}
    \label{fig:white_box_attacker}
\caption{ASR of gray-box (top) and white-box (bottom) attackers with $k_a=3$ budgets attacking a defender with a budget $k_d=2$.}
\label{fig:attacker_regimes}
\end{figure}

A fixed top-$k$ defense policy is brittle because a fully informed attacker can evade it. We therefore test \textit{randomized} defense. Here, the defender keeps the same budget $k_d$ but samples the protected set from a small high-scored edge pool before each run. The adaptive attacker knows the scoring but not the actual set of selected edges. We compare seven defense policies: no defense, deterministic top-$k$, uniform random over all edges, random draws from the top-$3$ or top-$4$ edges, and mixed policies that always pin the top edge while randomizing the second one. We use $k_d=2$, $k_a=3$, three brittle sparse topologies (\texttt{Chain}, \texttt{Ring}, \texttt{Hub}), Gemma-E4B and Qwen-9B, and 3 seeds for randomized trials.

\begin{table}[t]
\centering
\small
\caption{Randomized defense against gray-box and white-box adaptive attackers.}
\label{tab:randomized-monitoring}
\setlength{\tabcolsep}{4pt}
\begin{tabular}{lcc}
\toprule
Defense policy & Gray-box ASR & Adaptive ASR \\
\midrule
\rowcolor{gray!10}
\texttt{top k}        & \best{9.2}\,\%   & 58.3\,\% \\
\texttt{random k}           & 14.7\,\% & 55.0\,\% \\
\texttt{random top 3}               & 11.4\,\% & 58.1\,\% \\
\rowcolor{gray!10}
\texttt{random top 4}               & 12.8\,\% & \best{47.8\,\%} \\
\texttt{top1+random top 3}               & 12.8\,\% & 59.4\,\% \\
\texttt{top1+random top 4}               & 10.0\,\% & 55.3\,\% \\
\bottomrule
\end{tabular}
\end{table}

\autoref{tab:randomized-monitoring} reports ASR averaged across the six topology-model pairs. Deterministic top-$k$ is best against the gray-box attacker with $9.2\%$ ASR but worst against the adaptive attacker with $58.3\%$. Randomizing over the top four edges gives the best balance: gray-box ASR is still close to deterministic defense at $12.8\%$, while adaptive ASR is lower at $47.8\%$. Randomizing over only three edges is too narrow, and uniform randomization gives up too much protection. 

\resultbox{4.2}{Randomizing over the top \textsc{Mesa}-scored edges reaches the best tradeoff: it cuts adaptive ASR by $10.5$ points while losing only $3.6$ points under gray-box attack. }

\begin{takeawaybox}
    \noindent\textbf{\textcolor{takeawaygreen}{RQ4 Takeaway.}}
    \textsc{Mesa} provides an actionable defense-allocation signal: \textsc{Mesa}-guided defense on a small subset of security-critical communication channels can effectively block a large portion of attacks.
\end{takeawaybox}

\subsection{RQ5: \textsc{Mesa} Practicality}

In the previous evaluation we found that \textsc{Mesa} is effective and improves both offensive and defensive security analysis. With this model for measuring MAS vulnerability offline, we now assess how practical and readily-deployable \textsc{Mesa} is as a model and defense guide. In this section, we first analyze how the efficacy of \textsc{Mesa} changes when systems have constraints on where attacks and defenses can take place. We then conclude with an implementation evaluation of \textsc{Mesa} in production-ready frameworks to demonstrate the portability of using \textsc{Mesa} in real multi-agent systems.

\subsubsection{Exposure-aware threat surface}
\label{sec:rq5-exposure}
The experiments above let the attacker target any edge. In deployment, edges differ in how much reachability they require. We assign each edge an exposure tier based on the source agent's role: \textit{direct}, \textit{tool-low}, \textit{tool-high}, and \textit{internal}. These tiers separate external-facing channels from edges that require stronger attacker capabilities, as summarized in \autoref{tab:exposure_tiers}.

\begin{table}[t]
\centering
\footnotesize
\caption{Edge exposure tiers.}
\label{tab:exposure_tiers}
\setlength{\tabcolsep}{4pt}
\renewcommand{\arraystretch}{1.12}
\begin{tabular}{@{}p{1.25cm}p{2.35cm}p{3.25cm}@{}}
\toprule
\textbf{Tier} & \textbf{Source edge} & \textbf{Required capability} \\
\midrule
direct
& \texttt{customer facing}
& External user input directly shapes the message. \\
tool-low
& \texttt{database}
& Write access to user-controlled records. \\
tool-high
& \texttt{policy}
& Write access to curated KB or RAG content. \\
internal
& \texttt{supervisor}, \texttt{manager}, \texttt{transaction}
& MITM access or compromise of an internal agent. \\
\bottomrule
\end{tabular}
\end{table}

\autoref{fig:edge-exposure} separates intrinsic damage from realistic reachability. Internal edges often have the highest intrinsic ASR ($\approx 70\%$ on \texttt{Ring}) but require the strongest attacker. More reachable edges such as \textit{direct} and \textit{tool-low} have non-negligible ASR around $5$--$18\%$. The \textit{tool-low} tier is often more vulnerable than \textit{direct}: downstream agents treat database reports as authoritative facts, while customer-facing content is treated with more skepticism.

\begin{figure}[t]
\centering
  \includegraphics[width=0.9\columnwidth]{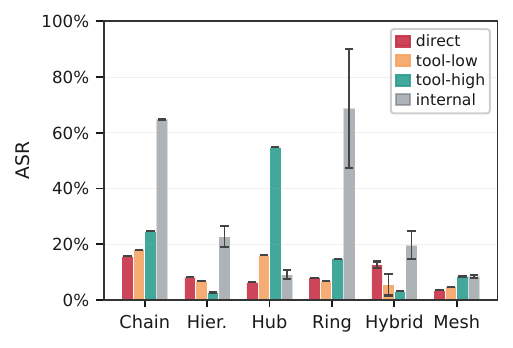}
\caption{ASR by exposure tier. Internal edges are most damaging but require insider capability; reachable edges have lower ASR.}
\label{fig:edge-exposure}
\end{figure}

\begin{table}[t]
\centering
\small
\caption{Exposure-aware \textsc{Mesa} defense on realistic attack surfaces in the Customer Service scenario.}
\label{tab:exposure_defense}
\setlength{\tabcolsep}{4pt}
\renewcommand{\arraystretch}{1.08}
\begin{tabular}{@{}lcccccc@{}}
\toprule
 & \texttt{Hub} & \texttt{Hier.} & \texttt{Chain} & \texttt{Hybrid} & \texttt{Ring} & \texttt{Mesh} \\
\midrule
Baseline    & 16\% & 9\% & 5\% & 5\% & 4\% & 4\% \\
\textsc{Mesa} & 10\% & 9\% & 3\% & 5\% & 4\% & 4\% \\
\bottomrule
\end{tabular}
\end{table}

We then evaluate exposure-aware defense. MAS operators usually know which agents connect to users, retrieval stores, or internal services as auxiliary edge-exposure information. Here, the attacker samples one reachable edge per task, and the defender protects the top reachable edge by \textsc{Mesa}. As shown in \autoref{tab:exposure_defense}, the relatively strongest effect is in \texttt{Hub}, where ASR drops from $16.3\%$ to $10.4\%$. These gains are modest in absolute values because the reachable surface is smaller and less vulnerable than the full internal-edge surface, but they show that \textsc{Mesa} remains useful when it is restricted to edges an attacker can realistically reach. 

\resultbox{5.1}{Exposure-aware scoring keeps \textsc{Mesa} practical: defenders can restrict defense to reachable edges and still reduce attack success on the exposed surface.}

\subsubsection{Portability to Production Frameworks}
\label{sec:generalization}
\textsc{Mesa} scores communication edges as opposed to features of our particular runtime framework. To test portability, we re-implement the customer-service workflow in \texttt{LangGraph}~\cite{LangGraphOverview} with the same roles, tasks, and topologies, evaluated on \texttt{Gemma-4-E4B} and \texttt{Qwen-3.5-9B}. 

\autoref{fig:langgraph} shows per-edge ASR ranks track each other across orchestration frameworks. Across five non-\texttt{Mesh} topologies, Spearman $\rho$ between \textsc{Mesa} ASR and \texttt{LangGraph} ASR is $+0.713$ for Gemma-4-E4B and $+0.863$ for Qwen-3.5-9B (both $p<10^{-4}$). \texttt{Mesh} is excluded because both runtimes produce ASR $\approx 0$. \textsc{Mesa} scores against \texttt{LangGraph} empirical ASR remain strong, confirming the signal is not specific to our orchestration. 

\begin{figure}[t]
\centering
 \includegraphics[width=0.95\columnwidth]{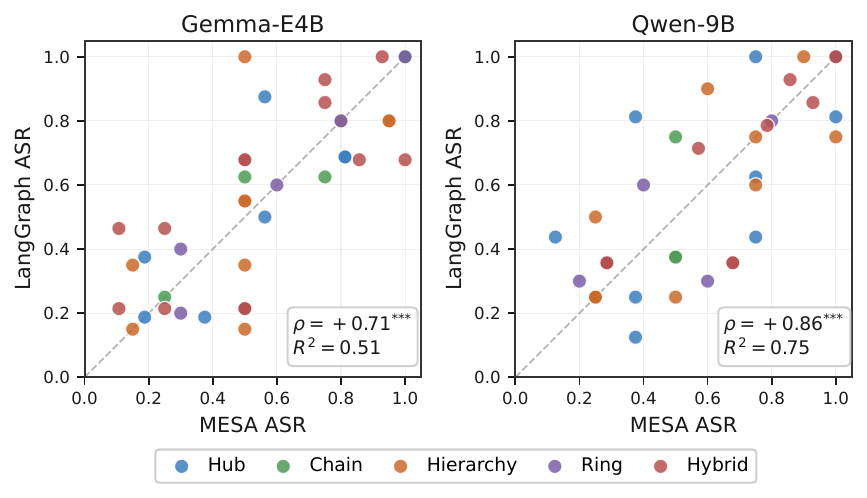}
\caption{\textsc{Mesa} edge scores transfer to \texttt{LangGraph}. Each point is one directed edge; per-topology $\rho \in [+0.53, +0.97]$, where $\rho$ is the Spearman coefficient
between the MESA score and empirical per-edge ASR.}
\label{fig:langgraph}
\end{figure}

\resultbox{5.2}{Framework Generalization: \textsc{Mesa} transfers to \texttt{LangGraph}, suggesting that edge vulnerability reflects the general MAS communication structure and dynamics.}

\begin{takeawaybox}
    \noindent\textbf{\textcolor{takeawaygreen}{RQ5 Takeaway.}}
    \textsc{Mesa} remains useful after incorporating reachability against exposure-sensitive attacks. \textsc{Mesa} is also readily portable to production-level frameworks, demonstrating practical utility.
\end{takeawaybox}
\section{Related Works}
\label{sec:bg-topology-defense}
Several works study MAS security through communication graphs. NetSafe~\cite{yuNetSafeExploringTopological2024} compares topologies and shows that system-level vulnerabilities depend on communication structure. TAMAS~\cite{kavathekarTAMASBenchmarkingAdversarial2025} and MASPi~\cite{anMASpiUnifiedEnvironment2025} benchmark adversarial behavior in MAS settings, showing that MAS possess unique and varied vulnerabilities due to their diverse agent roles and complex communication graphs. Our preliminary analysis (\S\ref{sec:rq-preliminary}) is consistent with this line of work in showing that certain graph structures used in MAS, such as sparse and bottleneck-heavy topologies, tend to be more vulnerable than others, such as dense topologies. However, we argue that this topology-level analysis is not enough for defense allocation---it tells an operator which graph family is safer on average, but not which edge in a deployed graph should be protected first. Defense resources can scale with the size of MAS~\cite{solomonLumiMASComprehensiveFramework2026, wangMegaAgentLargeScaleAutonomous2025}, while vulnerabilities may be concentrated on only a subset of the graph, making it imperative for defenders to prioritize guarding impactful edges for cost-efficient defenses. This work fills this gap by allowing defenders to rank edges within the same MAS using \textsc{Mesa} and intelligently pick which edges to protect.

A second line of research protects MAS during execution. SentinelAgent~\cite{heSentinelAgentGraphbasedAnomaly2025}, G-Safeguard~\cite{wangGSafeguardTopologyGuidedSecurity2025}, and BlindGuard~\cite{miaoBlindGuardSafeguardingLLMbased2025} use graph-based signals to detect or mitigate abnormal executions. Node-level defenses estimate which agents contribute to suspicious outcomes~\cite{wuMonitoringLLMbasedMultiAgent2025}. Other works add guard agents, inspect intermediate messages, or restrict control flow~\cite{hossainMultiAgentLLMDefense2025}. SafeAgents~\cite{aroraExposingWeakLinks2025} diagnoses weak agents and design choices under adversarial prompting. These methods are valuable once labeled examples of adversarial executions are available. However, realistic attack traces are hard to synthesize, so a system must first observe failure or suspicious behavior \textit{before} defenses can be improved. To the best of our knowledge, \textsc{Mesa} is the first offline method for edge-level defense allocation in MAS.

\section{Discussion}
\label{sec:discussion}

\noindent\textbf{Static and dynamic graphs.}
Our evaluation assumes the MAS communication graph is known offline and remains fixed during execution. This matches many MAS frameworks in production (\S\ref{sec:bg-mas}), where the operator specifies agents and routing rules explicitly. Dynamic MAS that create and remove agents and reroute messages at runtime require an online version of \textsc{Mesa}. A natural extension is to recompute edge features as the graph adapts. 

\noindent\textbf{Security scope.}
We instantiate edge compromise with context-aware misinformation injection. This models a realistic class of attacks (\S\ref{sec:threat-model}) and directly tests if downstream agents rely on corrupted inter-agent content. Other message-level attacks, such as prompt injection and jailbreaks, may follow the same security-critical edges, but their success also depends on model-specific behavior and prompt semantics. We therefore treat our attack pipeline as one concrete integrity threat, not as an exhaustive model of all MAS attacks. Our defense experiments use oracle interception: as long as a protected edge is attacked, the defense blocks the malicious content and restores the original message with $100\%$ accuracy. This assumption is used only to isolate the allocation problem (\textit{i.e.}, which edges should receive defense first) and not to claim a perfect deployable detector. In practice, the selected edges could be protected by LLM judges, classifiers, or other guardrails (\S\ref{sec:rq2-mesa-score}). \textsc{Mesa} is compatible with these mechanisms because it decides where they should be placed. Integrating concrete detectors and measuring their precision and recall is an important next step. 

\noindent\textbf{Beyond agent-to-agent edges.}
This work focuses on communication channels between agents. Tool calls, retrieval systems, memory stores, and external services can introduce additional attack surfaces. However, malicious content from these sources often still propagates through inter-agent messages. \textsc{Mesa} captures this downstream propagation at the edge level, while extending the graph to include intra-agent tool outputs, retrieval results, and memory states would be a crucial extension of this work. 
\section{Conclusion}
We proposed \textsc{Mesa}, a framework for scoring MAS communication edges by expected security impact before attack traces are available. \textsc{Mesa} uses only the communication graph and dynamic probes, requiring no attack labels. Our results show that edge-level vulnerability in MAS is often concentrated and predictable: a small set of communication channels can dominate attack impact, and this signal can guide defense, guardrails, and red-teaming. Securing MAS therefore requires more than checking individual agents or applying defense uniformly across all messages. Defenders should ask which communication channels are most likely to change the system outcome if compromised. \textsc{Mesa} provides a practical way to answer that question offline, helping system builders prioritize security-critical edges, compare design choices, and place existing defenses where they are most likely to benefit. 
\ifCLASSOPTIONcompsoc
  \section*{Acknowledgments}
\else
  \section*{Acknowledgment}
\fi
The authors would like to thank Blaine Hoak for editorial feedback and manuscript improvement and Tingwei Zhang for helpful discussions and comments. 

\noindent\textbf{Funding acknowledgement.} This material is based upon work supported by the National Science Foundation under Grant No. CNS-2343611 and by the U.S. Army Research Office under MURI grant No. W911NF-21-1-0317. Any opinions, findings, and conclusions or recommendations expressed in this material are those of the author(s) and do not necessarily reflect the views of the National Science Foundation or the U.S. Army Research Office. The U.S. Government is authorized to reproduce and distribute reprints for government purposes notwithstanding any copyright notation hereon.

\bibliographystyle{IEEEtran}
\bibliography{references}
\appendices
\section{Properties of the \textsc{Mesa} Score}
\label{app:properties}

We view the finite edge set $E$ as the sample space. Each edge $e\in E$ contributes one observation when computing correlations. Let $\Phi(e)=[\Phi_s(e),\Phi_d(e)]\in\mathbb{R}^d$ denote the feature vector for edge $e$, where $\Phi_s$ contains the static graph features and $\Phi_d$ contains the dynamic probing features. Let $R(\cdot)$ be the rank transform over edges, and let $Y(e)=R(\mathrm{ASR}(e))$. A feature-based saliency score has the form $s(e)=h(\Phi(e))$ for some deterministic function $h:\mathbb{R}^d\to\mathbb{R}$. We write $\rho_S$ for Spearman correlation and $\rho_P$ for Pearson correlation, with $\rho_S(a,b)=\rho_P(R(a),R(b))$. Although this paper measures ASR under misinformation injection, the statements below only require a scalar edge-level attack outcome.

\noindent\textbf{Feature-information ceiling.}
Any score $s=h(\Phi)$ is a deterministic function of the measured features. Its correlation with attack outcomes is therefore bounded by the attack-relevant information contained in $\Phi$:
\begin{equation}
    \rho_S(h(\Phi), \mathrm{ASR})
    =
    \rho_P(R(h(\Phi)),Y)
    \le
    \sup_{g:\mathbb{R}^d\to\mathbb{R}} \rho_P(g(\Phi),Y),
    \label{eq:feature-ceiling}
\end{equation}
where $g$ ranges over all deterministic real-valued functions of $\Phi$. The supremum is the best correlation achievable by any predictor that uses only these features, including a supervised predictor trained with attack labels. Thus, if a label-free score approaches this ceiling, further improvement requires additional information rather than only a different aggregation rule.

\noindent\textbf{Static--dynamic complementarity.}
Static and dynamic features capture different sources of edge risk. Static features describe where an edge sits in the communication graph. Dynamic probes measure whether the clean system relies on that edge for the task. These signals can differ: a central edge may carry redundant content, while a peripheral edge may carry a decision-critical fact.

\begin{proposition}[Additivity of complementary features]
\label{prop:decomp}
Let $\Phi_s\in\mathbb{R}^{|E|\times d_s}$ and $\Phi_d\in\mathbb{R}^{|E|\times d_d}$ be the column-centered static and dynamic feature matrices over the edge set $E$. Let $\Phi=[\Phi_s,\Phi_d]$, and let $Y\in\mathbb{R}^{|E|}$ be the centered rank-transformed ASR vector. For any feature block $X$, define
\begin{equation}
    R^2(X,Y)=\max_w \rho_P^2(w^\top X,Y)=\frac{\|P_XY\|^2}{\|Y\|^2},
\end{equation}
where $P_X$ is the orthogonal projection onto $\mathrm{col}(X)$. If $\Phi_s^\top\Phi_d=0$, then
\begin{equation}
    R^2(\Phi,Y)=R^2(\Phi_s,Y)+R^2(\Phi_d,Y).
    \label{eq:additive-decomp}
\end{equation}
\end{proposition}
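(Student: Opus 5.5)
The plan is to reduce everything to elementary linear algebra on the column spaces. First I will justify the identity $\max_w \rho_P^2(w^\top X,Y)=\|P_XY\|^2/\|Y\|^2$ given in the statement, since the additivity will be read off from it. Because $X$ and $Y$ are centered, for any $w$ with $Xw\neq 0$ we have $\rho_P(Xw,Y)=\langle Xw,Y\rangle/(\|Xw\|\,\|Y\|)$. The vector $Xw$ ranges over $\mathrm{col}(X)$, and $\langle Xw,Y\rangle=\langle Xw,P_XY\rangle$. By Cauchy--Schwarz,
\[
\rho_P^2(Xw,Y)\le \frac{\|P_XY\|^2}{\|Y\|^2},
\]
with equality when $Xw$ is proportional to $P_XY$. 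If $P_XY=0$, the maximum is $0$. Here I read $w^\top X$ as the linear combination $Xw$ of the columns, i.e.\ the score vector over edges.

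The key step is to show that orthogonality of the blocks splits the projector. The hypothesis $\Phi_s^\top\Phi_d=0$ says every column of $\Phi_s$ is orthogonal to every column of $\Phi_d$. Hence $\mathrm{col}(\Phi_s)\perp\mathrm{col}(\Phi_d)$, and $\mathrm{col}(\Phi)=\mathrm{col}(\Phi_s)\oplus\mathrm{col}(\Phi_d)$ is an orthogonal direct sum. It follows that
\[
P_\Phi=P_{\Phi_s}+P_{\Phi_d}.
\]
To verify this, check that the right-hand side is symmetric and idempotent, using $P_{\Phi_s}P_{\Phi_d}=0$, and that its range is $\mathrm{col}(\Phi)$. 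Alternatively, decompose any vector by its components in the two orthogonal subspaces and their orthogonal complement.

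Finally, apply Pythagoras:
\[
\|P_\Phi Y\|^2=\|P_{\Phi_s}Y+P_{\Phi_d}Y\|^2=\|P_{\Phi_s}Y\|^2+\|P_{\Phi_d}Y\|^2,
\]
since the two summands lie in orthogonal subspaces. Dividing by $\|Y\|^2$ gives \eqref{eq:additive-decomp}.

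I expect the main obstacle to be bookkeeping rather than substance. One issue is the degenerate cases: $\|Y\|=0$ (constant ASR ranks), where $R^2$ is undefined and I will exclude or adopt a convention, and rank-deficient feature blocks, where the projector is still well defined via the pseudoinverse $X(X^\top X)^+X^\top$. The other is making sure centering is used, so that Pearson correlation coincides with the cosine similarity of the vectors.
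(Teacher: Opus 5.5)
Your proposal is correct and follows the paper's proof: the orthogonality of the column spaces gives $P_\Phi=P_{\Phi_s}+P_{\Phi_d}$, the cross term vanishes, the squared norms add, and you divide by $\|Y\|^2$. Your Cauchy--Schwarz verification of $\max_w\rho_P^2(w^\top X,Y)=\|P_XY\|^2/\|Y\|^2$ and your handling of the degenerate cases are extra care that the paper omits, since it takes that identity as part of the definition, but they do not change the argument.
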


\begin{proof}
The condition $\Phi_s^\top\Phi_d=0$ implies that $\mathrm{col}(\Phi_s)$ and $\mathrm{col}(\Phi_d)$ are orthogonal subspaces of $\mathbb{R}^{|E|}$. Therefore, the projection onto their direct sum decomposes as $P_\Phi=P_{\Phi_s}+P_{\Phi_d}$. Applying this projection to $Y$ gives
\begin{equation}
\|P_\Phi Y\|^2
=\|P_{\Phi_s}Y\|^2
+2\langle P_{\Phi_s}Y,P_{\Phi_d}Y\rangle
+\|P_{\Phi_d}Y\|^2.
\end{equation}
The cross term is zero because $P_{\Phi_s}Y\in\mathrm{col}(\Phi_s)$ and $P_{\Phi_d}Y\in\mathrm{col}(\Phi_d)$ lie in orthogonal subspaces. Dividing by $\|Y\|^2$ yields~\eqref{eq:additive-decomp}.
\end{proof}

When $\Phi_s^\top\Phi_d$ is small but nonzero, the same decomposition leaves a residual cross term that measures the violation of additivity. The PCA results in the main paper and in \autoref{fig:pca-biplot-se} show that static and dynamic features load on near-orthogonal axes, supporting the use of both blocks in the composite \textsc{Mesa} ranking.

\noindent\textbf{Content-driven saturation.}
The measured features $\Phi$ do not capture every factor that determines attack success. ASR also depends on task content, adversarial phrasing, and model-specific behavior. These signals are not fully measurable from the graph or dynamic probes before deployment. We therefore expect $\rho_S(s,\mathrm{ASR})$ to remain below $1$, even for supervised predictors.

\noindent\textbf{Implication.}
These properties position \textsc{Mesa} as a budget-allocation primitive rather than an exact ASR predictor. Its goal is to rank edges so that monitoring, probing, and red-teaming focus on the channels where compromise is most likely to matter.

\section{Clean Baseline Accuracy}
\label{app:clean-accuracy}

\autoref{tab:clean_accuracy} reports clean accuracy for the six named topologies. Each cell reports decision accuracy over the tasks in that configuration: 20 customer-service tasks, 50 software-engineering tasks, and 75 debate tasks. Dashes indicate that the configuration was not run. We skipped Qwen3.5-27B with debate since it consistently exceeded our wall-clock limit. Customer-service and software-engineering accuracy varies across topologies and models. Debate accuracy is uniformly high because all agents receive the same question and can converge through voting. Per-edge ASR is computed only on clean-correct tasks, so lower clean accuracy reduces the number of eligible trials but does not count baseline model errors as attack success.

\begin{table*}[t]
\centering
\footnotesize
\caption{Clean baseline accuracy by scenario, topology, and model. }
\label{tab:clean_accuracy}
\setlength{\tabcolsep}{4.5pt}
\renewcommand{\arraystretch}{1.08}
\begin{tabular}{@{}llccccc@{}}
\toprule
\textbf{Scenario} & \textbf{Topology} & \textbf{Llama-8B} & \textbf{Qwen-9B} & \textbf{Qwen-27B} & \textbf{Gemma-E4B} & \textbf{Gemma-26B} \\
\midrule
\textit{Customer Service}
& \texttt{Chain}     & 60\% & 100\% & 100\% & 90\%  & 95\%  \\
& \texttt{Ring}      & 60\% & 85\%  & 100\% & 95\%  & 100\% \\
& \texttt{Hub}       & 75\% & 95\%  & 95\%  & 100\% & 100\% \\
& \texttt{Hierarchy} & 40\% & 75\%  & 70\%  & 100\% & 85\%  \\
& \texttt{Hybrid}    & 50\% & 80\%  & 75\%  & 95\%  & 90\%  \\
& \texttt{Mesh}      & 50\% & 95\%  & 100\% & 95\%  & 75\%  \\
\midrule
\textit{Software Engineering}
& \texttt{Chain}     & 48\% & 82\%  & 86\%  & 74\%  & 84\%  \\
& \texttt{Ring}      & 60\% & 80\%   & 94\%  & 80\%  & 84\%  \\
& \texttt{Hub}       & 58\% & 82\%  & 92\%  & 80\%  & 88\%  \\
& \texttt{Hierarchy} & 64\% & 82\%  & 96\%  & 82\%  & 84\%  \\
& \texttt{Hybrid}    & 66\% & 78\%  & 94\%  & 78\%  & 86\%  \\
& \texttt{Mesh}      & 54\% & 82\%  & 94\%  & 80\%  & 84\%  \\
\midrule
\textit{Debate}
& \texttt{Chain}     & 87\% & 96\%  & ---   & 92\%  & 92\%  \\
& \texttt{Ring}      & 89\% & 95\%  & ---   & 93\%  & 91\%  \\
& \texttt{Hub}       & 89\% & 95\%  & ---   & 93\%  & 89\%  \\
& \texttt{Hierarchy} & 88\% & 96\%  & ---   & 92\%  & 92\%  \\
& \texttt{Hybrid}    & 88\% & 96\%  & ---   & 92\%  & 92\%  \\
& \texttt{Mesh}      & 85\% & 96\%  & ---   & 92\%  & 92\%  \\
\bottomrule
\end{tabular}
\end{table*}

\section{Supervised Ceiling Model and Results}
\label{app:supervised-ceiling}

We report a supervised model to estimate how much edge-vulnerability signal is jointly contained in the eight \textsc{Mesa} features. We convert every feature to its normalized score $\tilde r_i(e)\in[0,1]$ and fit a linear regression to the target $\mathrm{rank}(\mathrm{ASR}(e))$. Because Spearman correlation is Pearson correlation on ranks, this construction aligns the supervised objective with our evaluation metric.

This model is an in-sample \emph{ceiling}, not a deployable defense. It assumes access to the true ASR of every edge and may overfit the specific edge set. Its purpose is for analysis: it upper-bounds any fixed-weight linear aggregation over the same features, including \textsc{Mesa}$_\text{combined}$, which uses fixed pre-defined signs rather than learned weights.

\autoref{tab:supervised_ceiling} shows that the eight features contain substantial predictive signal. On customer service and software engineering, the unsupervised \textsc{Mesa}$_\text{combined}$ score captures a large fraction of the supervised ceiling despite using no attack labels or fitted weights. In contrast, multi-turn debate shows a much larger gap: the supervised ceiling remains positive, while \textsc{Mesa}$_\text{combined}$ is pulled toward zero. This supports the interpretation in \autoref{sec:rq2-mesa-score}: in role-symmetric debate settings, the features remain informative, but their optimal directions are less stable across models.

\begin{table}[t]
\centering
\small
\setlength{\tabcolsep}{4pt}
\renewcommand{\arraystretch}{1.05}
\caption{In-sample supervised \emph{ceiling} for \textsc{Mesa} edge vulnerability prediction. $^{*}p{<}0.05$, $^{**}p{<}0.01$, $^{***}p{<}0.001$.}
\label{tab:supervised_ceiling}
\begin{tabular}{lcc}
\toprule
Model & \textsc{Mesa}$_\text{combined}$ & Supervised (ceiling) \\
\midrule
\multicolumn{3}{l}{\textit{Customer Service}} \\
\cmidrule(lr){1-3}
Llama-8B  & $+$0.381\textsuperscript{**}  & \best{$+$0.581\textsuperscript{***}} \\
Qwen-9B   & $+$0.633\textsuperscript{***} & \best{$+$0.764\textsuperscript{***}} \\
Qwen-27B  & $+$0.430\textsuperscript{**}  & \best{$+$0.567\textsuperscript{***}} \\
Gemma-E4B & $+$0.725\textsuperscript{***} & \best{$+$0.815\textsuperscript{***}} \\
Gemma-26B & $-$0.215                      & \best{$+$0.620\textsuperscript{***}} \\
\textit{Mean} & $+$0.391{\scriptsize\,$\pm$0.147} & \best{$+$0.669{\scriptsize\,$\pm$0.045}} \\
\midrule
\multicolumn{3}{l}{\textit{Software Engineering}} \\
\cmidrule(lr){1-3}
Llama-8B  & $+$0.612\textsuperscript{***} & \best{$+$0.780\textsuperscript{***}} \\
Qwen-9B   & $+$0.540\textsuperscript{***} & \best{$+$0.821\textsuperscript{***}} \\
Qwen-27B  & $+$0.649\textsuperscript{***} & \best{$+$0.757\textsuperscript{***}} \\
Gemma-E4B & $+$0.437\textsuperscript{***} & \best{$+$0.592\textsuperscript{***}} \\
Gemma-26B & $+$0.575\textsuperscript{***} & \best{$+$0.756\textsuperscript{***}} \\
\textit{Mean} & $+$0.563{\scriptsize\,$\pm$0.032} & \best{$+$0.741{\scriptsize\,$\pm$0.035}} \\
\midrule
\multicolumn{3}{l}{\textit{Multi-Turn Debate}} \\
\cmidrule(lr){1-3}
Llama-8B  & $-$0.767\textsuperscript{***} & \best{$+$0.848\textsuperscript{***}} \\
Qwen-9B   & $+$0.031                      & \best{$+$0.406\textsuperscript{**}} \\
Qwen-27B  & ---                           & --- \\
Gemma-E4B & $+$0.131                      & \best{$+$0.356\textsuperscript{**}} \\
Gemma-26B & $+$0.497\textsuperscript{***} & \best{$+$0.625\textsuperscript{***}} \\
\textit{Mean} & $-$0.027{\scriptsize\,$\pm$0.231} & \best{$+$0.559{\scriptsize\,$\pm$0.098}} \\
\bottomrule
\end{tabular}
\end{table}

\section{Random Topology Results}
\label{app:random_topologies}
\begin{table}[t]
\centering
\small
\caption{Spearman $\rho$ between \textsc{Mesa} variants and per-edge ASR on the two random topologies.}
\label{tab:mesa_rho_random}
\begin{tabular}{lcc}
\toprule
Model & \textsc{Mesa}$_\text{struct}$ & \textsc{Mesa}$_\text{combined}$ \\
\midrule
Llama-8B & +0.202\textsuperscript{$\dagger$} & +0.102\textsuperscript{$\dagger$} \\
Qwen-9B & $-$0.052\textsuperscript{$\dagger$} & +0.235\textsuperscript{$\dagger$} \\
Qwen-27B & +0.172\textsuperscript{$\dagger$} & +0.123\textsuperscript{$\dagger$} \\
Gemma-E4B & +0.276\textsuperscript{$\dagger$} & +0.440\textsuperscript{**} \\
Gemma-26B & +0.058\textsuperscript{$\dagger$} & +0.048\textsuperscript{$\dagger$} \\
\midrule
\textit{Mean} & +0.131{\scriptsize\,$\pm$0.052} & +0.190{\scriptsize\,$\pm$0.062} \\
\bottomrule
\end{tabular}
\end{table}
We also evaluate \textsc{Mesa} on random topology baselines to test how much its ranking quality depends on structured MAS communication patterns. Unlike the named topologies used in the main evaluation, random graphs do not intentionally encode workflow bottlenecks, bridges, or role-specific communication channels. They therefore provide a harder setting for edge-level risk prediction: if most edges play similar structural roles, then attack impact is less likely to concentrate on a small subset of identifiable channels.

\autoref{tab:mesa_rho_random} reports Spearman $\rho$ between \textsc{Mesa} rankings and per-edge ASR on the \textit{Erd\H{o}s--R\'enyi} (ER) and \textit{Barab\'asi--Albert} (BA) random topologies on customer service. Overall, correlations are lower than on the named MAS topologies. The combined score improves the mean correlation over the structural-only score, from $+0.131$ to $+0.190$, but most per-model correlations are not statistically significant. The strongest result appears for \texttt{Gemma4-E4B}, where \textsc{Mesa}$_\text{combined}$ reaches $\rho=+0.440$ with $p<0.01$.

These results highlight that \textsc{Mesa} is most effective when the communication graph contains meaningful edge heterogeneity, such as hubs, bridges, bottlenecks, or asymmetric information flow, which are common in production workflows. BA graphs partially preserve this type of structure through hub formation, while ER graphs tend to make connectivity more uniform. As a result, the random baselines provide less structural variation for \textsc{Mesa} to exploit. When edges are nearly interchangeable, or when attack success is sparse and noisy, there is little stable edge-level signal to rank. Thus, the weaker random-topology results show that \textsc{Mesa}'s advantage comes from identifying security-relevant graph-theoretic features in MAS communication graphs.

\section{Generative AI Usage Considerations}
Generative AI was used to assist with Python implementation, primarily for code refactoring, debugging, and plotting scripts. Generative AI was also used for editorial purposes in this manuscript, and all outputs were inspected by the authors to ensure accuracy and originality. Five LLMs are evaluated in our experiments. The research questions, metrics, experimental design, analysis, and technical claims were developed and validated by the authors.

\cleardoublepage

\end{document}